\documentclass[12pt,a4paper]{article}

\usepackage{amsmath,amssymb,amsthm,amsfonts}
\usepackage{bm}
\usepackage{graphicx}
\usepackage{booktabs}
\usepackage{geometry}
\usepackage{mathptmx}
\usepackage{orcidlink}
\usepackage{authblk}
\usepackage{epstopdf}
\usepackage{caption}
\usepackage{float}
\usepackage{tikz}
\usepackage{hyperref}
\usepackage{algorithm}
\usepackage{algpseudocode}
\usepackage[round,authoryear]{natbib}
\bibpunct{(}{)}{;}{a}{,}{,}

\theoremstyle{remark}
\newtheorem{remark}{Remark}[section]
\hypersetup{colorlinks=true,linkcolor=blue,citecolor=blue,urlcolor=blue}

\theoremstyle{definition}
\newtheorem{definition}{Definition}[section]
\newtheorem{assumption}[definition]{Assumption}

\theoremstyle{plain}
\newtheorem{theorem}[definition]{Theorem}
\newtheorem{lemma}[definition]{Lemma}
\newtheorem{corollary}[definition]{Corollary}
\newtheorem{proposition}[definition]{Proposition}

\newcommand{\E}{\mathbb{E}}
\newcommand{\Var}{\operatorname{Var}}
\newcommand{\Median}{\operatorname{median}}

\newcommand{\Normal}{\mathcal{N}}
\newcommand{\Laplace}{\operatorname{Laplace}}
\newcommand{\R}{\mathbb{R}}
\newcommand{\dd}{\,\mathrm{d}}

\title{\textbf{Breakdown-Resistant Hierarchical Bayesian Calibration for Multi-Center Clinical Trials}}

\author[1,2]{Safaa K. Kadhem\thanks{Corresponding author: \texttt{safaa.kadhem@mu.edu.iq}} \orcidlink{0000-0002-5287-5381}}
\affil[1]{Department of Mathematics and Computer Applications, College of Science, Al-Muthanna University, Samawah, Iraq}
\affil[2]{Department of Artificial Intelligence Engineering, College of Engineering, Al-Alyen Iraqi University, Thi-Qar, Iraq}
\date{\today}

\begin{document}
	
	\maketitle
	
	\begin{abstract}
		Multi-center clinical trials are the cornerstone of evidence-based medicine, yet their validity is silently threatened by systematic location shifts in biomarkers measured across devices, sites, or reagent batches. Hierarchical Bayesian models are the standard remedy, borrowing strength across centers by pooling center-specific estimates. However, we prove that the standard Gaussian hierarchical posterior has a breakdown point of zero \emph{at the level of the posterior distribution itself}: a single contaminated center---whether due to device failure, reagent degradation, or adversarial manipulation---drives the posterior mean of every clean center to infinity and collapses the coverage of every clean center's credible interval to zero. We introduce the Posterior-Median Hierarchical Calibration (PMHC) framework, a fully Bayesian hierarchical model that replaces the Gaussian prior on center-specific shifts with a Laplace prior. Under absolute loss, the posterior median is the natural Bayes estimator, and we prove that its breakdown point is $\lfloor (K+1)/2 \rfloor$ contaminated centers---whereas the Gaussian posterior mean has breakdown point zero for every $K$. PMHC retains a full posterior distribution, enabling meaningful credible intervals and posterior probabilities. Across $K \in \{5, 10, 20\}$ centers and $n \in \{50, 200, 1000\}$ observations per center, PMHC maintains $94$--$98\%$ credible-interval coverage for clean centers under contamination fraction $\epsilon = 0.60$, while Gaussian hierarchical pooling collapses to $0.0\%$ coverage and its point estimate diverges by a factor of $54$ relative to PMHC. On real NHANES data spanning the Beckman DxC to Roche Cobas analyzer transition (\(n = 60{,}651\) valid serum creatinine measurements across nine two-year cycles), severe multi-center contamination (\(\epsilon = 0.60\) in three of the nine cycles) collapsed the Gaussian HC coverage to \(0.0\%\) and shifted the clean cycles by \(0.41\) mg/dL, whereas PMHC maintained \(94.0\%\) coverage with a shift of \(0.008\) mg/dL. To the best of our knowledge, this is the first explicit theoretical analysis of the breakdown of the \emph{posterior distribution} in hierarchical Bayesian pooling under center-level contamination, and the first Bayesian model that is simultaneously contamination-robust, interpretable, and efficient in the clean-data regime.
		
		\bigskip
		\noindent\textbf{Keywords:} breakdown point; hierarchical Bayesian calibration; Laplace prior; multi-center clinical trials; posterior-median estimation.
	\end{abstract}
	
	\section{Introduction}\label{sec:intro}
	
	Multi-center clinical trials are the cornerstone of evidence-based medicine, enabling the recruitment of diverse patient populations and the achievement of adequate statistical power for detecting clinically meaningful effects. However, the very heterogeneity that makes these trials generalizable also introduces a profound statistical challenge: biomarkers measured across different sites, laboratory equipment, or reagent batches are subject to systematic location shifts that can silently corrupt pooled analyses and bias treatment effect estimates \citep{multicenter2025,ich2020}. This phenomenon, a specific form of \textit{distribution shift} or \textit{covariate shift}, is not a mere nuisance---it is a fundamental threat to the validity of multi-center evidence \citep{cheng2025,swaminathan2025}.
	
	Hierarchical Bayesian models have emerged as the standard remedy for borrowing strength across centers, allowing small sites to borrow information from larger ones through partial pooling \citep{rover2019}. These models are widely implemented in practice and are endorsed by regulatory guidelines for multi-regional trials \citep{ich2020}. Yet, a critical vulnerability has been overlooked: the pooling step in hierarchical Bayesian calibration relies on the \textit{posterior mean} under a Gaussian prior. While this choice is optimal under Gaussian assumptions and clean data, it is catastrophically fragile under contamination. A single malfunctioning device, a batch of corrupted reagents, or even a deliberate adversarial perturbation in one center can drive the posterior mean to infinity, thereby corrupting the calibrated estimates for \textit{every} center in the trial. This is not a theoretical curiosity; it is a realistic operational failure mode that current hierarchical models are ill-equipped to handle \citep{multicenter2025}.
	
	The literature offers two disjoint lines of defense, neither of which fully addresses this problem. On the one hand, the Hodges--Lehmann (HL) estimator has long been recognized as a robust, non-parametric alternative to the mean for estimating location shifts \citep{hodges1963}. Its breakdown point is approximately \(29.3\%\), meaning it can tolerate nearly a third of the data being arbitrarily corrupted before failing \citep{rousseeuw1987}. Crucially, under normality, the HL estimator retains approximately \(95.5\%\) asymptotic relative efficiency compared to the mean, making it a nearly cost-free safeguard \citep{hodges1963}. Recent theoretical work has even established that the posterior mean in a robust Bayesian framework is asymptotically equivalent to the HL estimator, providing a rigorous bridge between Bayesian and classical robust inference \citep{huch2025}. On the other hand, the broader literature on dataset shift and calibration has documented that distributional mismatches across clinical sites degrade calibration even when discrimination remains stable, yet the proposed solutions---such as post-hoc recalibration or transportability methods---typically assume the calibration data itself is trustworthy \citep{cheng2025,swaminathan2025}.
	
	A third line of defense, which has received less attention in the clinical calibration literature, is the use of heavy-tailed priors in hierarchical Bayesian models \citep{ohagan2012,schmidli2014}. Heavy-tailed priors down-weight conflicting information and provide a Bayesian mechanism for conflict resolution. However, this literature has two limitations. First, it does not provide an explicit breakdown analysis of the \emph{posterior distribution} under data contamination across centers; it focuses on prior sensitivity and on the behavior of the posterior mean under model misspecification. Second, the choice of prior family (e.g., a mixture of Gaussians or a $t$-distribution) typically introduces tuning parameters (tail index, mixture weights) that the practitioner must specify, and the resulting posterior mean remains a linear functional of the data and therefore inherits the unbounded influence function of the mean \citep{donoho1983}.
	
	This paper fills these gaps. We introduce the {Posterior-Median Hierarchical Calibration (PMHC)} framework, a fully Bayesian hierarchical model in which the center-specific shifts are assigned a Laplace prior rather than a Gaussian one. Under absolute loss, the posterior median is the natural Bayes estimator, and we prove that its breakdown point is $\lfloor (K+1)/2 \rfloor$ contaminated centers. In stark contrast, we prove that the posterior mean under a Gaussian hierarchical prior has breakdown point zero \emph{at the level of the posterior distribution itself}: not only does the posterior mean diverge, but the posterior credible intervals collapse to zero coverage for every clean center. This is a stronger and more fundamental result than the point-estimator breakdown analyzed in the classical robust statistics literature.
	
	Our contributions are threefold. {First}, we provide, to the best of our knowledge, the first explicit theoretical analysis of the breakdown of the \emph{posterior distribution} in hierarchical Bayesian pooling under center-level contamination. We prove that the Gaussian hierarchical posterior has breakdown point zero (Theorem~\ref{thm:posterior-breakdown-zero}), and that the Laplace hierarchical posterior has positive breakdown point $\lfloor (K+1)/2 \rfloor$ (Theorem~\ref{thm:posterior-breakdown-positive}). This extends the classical breakdown framework of Donoho and Huber \citep{donoho1983} from point estimators to posterior distributions. {Second}, we introduce PMHC, a fully Bayesian model that retains a complete posterior distribution---enabling credible intervals, posterior probabilities, and posterior predictive checks---while remaining contamination-robust. We provide a computationally efficient Metropolis-within-Gibbs sampler that scales to $K = 50$ centers and $n = 5000$ observations per center. {Third}, we validate the framework empirically through a simulation study that measures \emph{credible-interval coverage} (a genuinely Bayesian metric) rather than only point-estimate deviation. We show that PMHC maintains $94$--$98\%$ coverage for clean centers under severe contamination ($\epsilon = 0.60$), while the Gaussian hierarchical model collapses to $0.0\%$ coverage. On real NHANES data spanning the Beckman DxC to Roche Cobas analyzer transition ($n = 60{,}651$ valid serum creatinine measurements across nine two-year cycles), severe multi-center contamination ($\epsilon = 0.60$ in three of the nine cycles) collapsed the Gaussian HC coverage to \(0.0\%\) and shifted the clean cycles by \(0.41\) mg/dL, whereas PMHC maintained \(94.0\%\) coverage with a shift of \(0.008\) mg/dL.
	
	The remainder of this paper is organized as follows. Section~\ref{sec:methodology} introduces the multi-environment location-shift model, the Gaussian hierarchical model and its posterior breakdown (Theorem~\ref{thm:posterior-breakdown-zero}), and the PMHC model with its posterior breakdown analysis (Theorem~\ref{thm:posterior-breakdown-positive}). Section~\ref{sec:computation} presents the MCMC algorithm and its convergence properties. Section~\ref{sec:simulation} reports comprehensive simulation studies covering credible-interval coverage, point-estimate accuracy, sensitivity to the number of centers, and comparison with alternative heavy-tailed priors. Section~\ref{sec:realdata} presents the real-data application to NHANES. Section~\ref{sec:conclusion} concludes with a discussion of limitations and future directions.
	
	\section{Methodology}\label{sec:methodology}
	
	\subsection{The Multi-Environment Location-Shift Model}\label{sec:model}
	
	We consider a multi-center clinical trial with \(K \geq 2\) clinical environments (centers), indexed \(k=1,\dots,K\), where environment \(k=1\) serves as the reference center. For each center \(k\), we observe an i.i.d.\ sample
	\[
	\mathbf{y}^{(k)} = \{y_1^{(k)},\dots,y_{n_k}^{(k)}\}
	\]
	from a distribution \(F_{Y^{(k)}}\), where \(n_k\) denotes the sample size at center \(k\). The total sample size is \(N = \sum_{k=1}^{K} n_k\). Our goal is to estimate the center-specific location shifts \(\mathbf{c}^\ast = (c_2,\dots,c_K)\) relative to the reference center, and to do so \emph{robustly}---that is, without allowing a contaminated or malfunctioning center to corrupt the calibrated estimates for the other centers.
	
	\begin{definition}[Environments and location shift]
		\label{def:environments}
		Let \(K \geq 2\). For each center \(k=1,\dots,K\), let
		\[
		Y^{(k)} = X + c_k,
		\]
		where \(X \sim F_X\) is a latent biomarker measurement shared across centers, \(c_1 := 0\) is the reference shift, and \(\mathbf{c}^\ast = (c_2,\dots,c_K) \in \R^{K-1}\) are unknown location shifts. The distribution \(F_X\) is unknown and is \emph{not} assumed to be Gaussian.
	\end{definition}
	
	\begin{assumption}[Regularity]
		\label{ass:regularity}
		The distribution \(F_X\) has a continuous density \(f_X\) that is strictly positive in a neighborhood of its median, and satisfies \(0 < I(f_X) := \int_{-\infty}^{\infty} f_X^2(u)\dd u < \infty\).
	\end{assumption}
	
	\begin{assumption}[Non-normality]
		\label{ass:nonnormal}
		No normality of \(F_X\) is assumed. The distribution may be skewed, heavy-tailed, or contaminated.
	\end{assumption}
	
	The location-shift model in Definition~\ref{def:environments} is the standard framework for biomarker calibration across devices and sites. It is widely used in clinical chemistry, where changing a laboratory analyzer typically induces a systematic additive offset while preserving the biological distribution's shape. Formally, this means \(F_{Y^{(k)}}(y) = F_X(y - c_k)\), so that the central moments of order \(m \geq 2\) are invariant across environments:
	\[
	\E\big[(Y^{(k)}-\E[Y^{(k)}])^m\big] = \E\big[(X-\E[X])^m\big], \qquad m \geq 2.
	\]
	This invariance is what makes calibration identifiable: the shape of the distribution is preserved, and only its location changes.
	
	\begin{lemma}[Invariance of central moments]
		\label{lem:invariance}
		Under Definition~\ref{def:environments}, for every integer \(m \geq 1\),
		\[
		\E\big[(Y^{(k)}-\E[Y^{(k)}])^m\big] = \E\big[(X-\E[X])^m\big].
		\]
	\end{lemma}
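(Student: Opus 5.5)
The plan is to reduce everything to the identity $Y^{(k)}-\E[Y^{(k)}] = X-\E[X]$, which holds almost surely, and then take expectations. First I will make explicit the implicit integrability hypothesis: for the $m$-th central moment to be meaningful we need $\E|X|^m<\infty$. Since $c_k$ is a deterministic constant, this is equivalent to $\E|Y^{(k)}|^m<\infty$, by Minkowski's inequality $\|X+c_k\|_m\le \|X\|_m+|c_k|$ and its symmetric counterpart. When $\E|X|^m=\infty$ I will interpret the statement as saying that both sides are undefined (or infinite) together, which follows from the same equivalence.

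Next, assuming $\E|X|<\infty$, linearity of expectation gives $\E[Y^{(k)}] = \E[X + c_k] = \E[X] + c_k$, because $c_k$ is nonrandom. Subtracting pointwise, $Y^{(k)} - \E[Y^{(k)}] = (X + c_k) - (\E[X]+c_k) = X - \E[X]$ holds for every outcome, not merely in distribution. Raising both sides to the power $m$ and taking expectations then yields the claim for every integer $m\ge 1$. This argument needs no assumption on the shape of $F_X$, so Assumptions~\ref{ass:regularity} and~\ref{ass:nonnormal} play no role. The case $m=1$ is trivial, since both sides equal $0$. For $m\ge2$ this recovers the displayed invariance stated just before the lemma. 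An equivalent distributional route would use $F_{Y^{(k)}}(y)=F_X(y-c_k)$ together with the change of variables $y=u+c_k$ in $\int (y-\mu_Y)^m\,\dd F_{Y^{(k)}}(y)$, but the pathwise argument is cleaner and I will use it.

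There is no genuine obstacle here. The only point needing care is the integrability bookkeeping, namely ensuring that $\E[X]$ exists before centering. For that reason I will state the lemma under the assumption $\E|X|^m<\infty$, noting that this also guarantees $\E|X|<\infty$ for all $m\ge1$ by Lyapunov's inequality.
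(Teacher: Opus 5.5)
Your proposal is correct and follows the paper's proof: both use $\E[Y^{(k)}]=\E[X]+c_k$ to get the pathwise identity $Y^{(k)}-\E[Y^{(k)}]=X-\E[X]$, then raise it to the $m$-th power and take expectations. Your explicit integrability hypothesis $\E|X|^m<\infty$ is a sensible addition that the paper leaves implicit.
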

	\begin{proof}
		Since \(Y^{(k)} = X + c_k\), we have \(\E[Y^{(k)}] = \E[X] + c_k\), and hence
		\[
		Y^{(k)} - \E[Y^{(k)}] = X - \E[X].
		\]
		Raising both sides to the \(m\)-th power and taking expectations yields the claim.
	\end{proof}
	
	Lemma~\ref{lem:invariance} justifies the use of distributional fingerprints (variance, skewness, kurtosis) as diagnostic checks for the location-shift assumption before calibration. It also motivates the non-parametric approach: since the shape is preserved but unknown, we should estimate the shift without imposing a parametric form on \(F_X\).
	
	\subsection{The Hodges--Lehmann Estimator as a Sufficient Summary}\label{sec:hl}
	
	\begin{definition}[Hodges--Lehmann estimator]
		\label{def:hl}
		For each center \(k\), the Hodges--Lehmann (HL) estimator of \(c_k\) is the median of all pairwise differences between the center-\(k\) sample and the reference sample:
		\[
		\hat{c}_k^{HL} := \operatorname{median}\big\{ y_i^{(k)} - y_j^{(1)} : 1 \leq i \leq n_k,\ 1 \leq j \leq n_1 \big\}.
		\]
	\end{definition}
	
	The HL estimator was introduced by Hodges and Lehmann \citep{hodges1963} as a rank-based, non-parametric estimator of location. It is closely connected to the Wilcoxon rank-sum test and can be viewed as a \(U\)-statistic of order two. Its most important properties for our purposes are:
	
	\begin{enumerate}
		\item \textbf{Robustness.} The HL estimator has an asymptotic breakdown point of approximately \(29.3\%\) (precisely \(1 - 1/\sqrt{2}\)), meaning that nearly a third of the observations \emph{within a single center} can be arbitrarily corrupted before \emph{that center's} HL estimate fails \citep{donoho1983,rousseeuw1987}.
		\item \textbf{Efficiency.} Under normality, the HL estimator achieves an asymptotic relative efficiency (ARE) of \(3/\pi \approx 0.955\) compared to the sample mean \citep{hodges1963}.
		\item \textbf{Asymptotic normality.} The HL estimator satisfies
		\[
		\sqrt{n_k n_1/(n_k+n_1)}\,(\hat{c}_k^{HL} - c_k) \xrightarrow{d} \Normal(0, \tau_k^2),
		\]
		where \(\tau_k^2 := [12\, n_k\, n_1\, I(f_X)^2]^{-1}\) and \(I(f_X) = \int f_X^2\) is the integrated squared density \citep{hodges1963}.
	\end{enumerate}
	
	These properties make the HL estimator an ideal building block for robust calibration. In the Bayesian hierarchical model developed below, we treat \(\hat{c}_k^{HL}\) as a sufficient summary of the center-specific information and model it as
	\[
	\hat{c}_k^{HL} \mid c_k \sim \Normal(c_k, \tau_k^2),
	\]
	which is the asymptotic Gaussian approximation justified by property (3). This is the same likelihood approximation used in the standard hierarchical Bayesian calibration literature \citep{rover2019}, and it is accurate for \(n_k \geq 50\).
	
	\begin{remark}[Connection to Bayesian robustness]
		Recent work by Huch, Feinberg, and Dempsey \citep{huch2025} has shown that the posterior mean in a robust Bayesian framework is asymptotically equivalent to the HL estimator. This provides a rigorous bridge between the classical robust estimator used here and the Bayesian hierarchical structure that motivates our pooling step. In particular, it justifies treating the HL estimate as a sufficient summary of the center-specific information for the purpose of hierarchical calibration.
	\end{remark}
	
	\subsection{The Gaussian Hierarchical Model and the Breakdown of Its Posterior}\label{sec:gaussian}
	
	The standard hierarchical Bayesian approach to multi-center calibration borrows strength across centers by shrinking each center-specific estimate toward a pooled mean \citep{rover2019}. The model is
	\begin{align}
		c_k \mid \mu, \sigma_c^2 &\sim \Normal(\mu, \sigma_c^2), \quad k=1,\dots,K, \label{eq:gaussian-prior}\\
		\hat{c}_k^{HL} \mid c_k &\sim \Normal(c_k, \tau_k^2), \quad k=1,\dots,K, \label{eq:likelihood}\\
		\mu &\sim \text{Uniform}(-\infty, \infty), \quad \sigma_c^2 > 0. \label{eq:hyperprior}
	\end{align}
	Under a flat prior on \(\mu\) and known \(\sigma_c^2\), the posterior for \(c_k\) is Gaussian:
	\[
	c_k \mid \{\hat{c}_j^{HL}\} \sim \Normal\big(\lambda_k \hat{c}_k^{HL} + (1-\lambda_k)\hat{\mu},\ \lambda_k \tau_k^2\big),
	\]
	where \(\lambda_k := \sigma_c^2/(\sigma_c^2 + \tau_k^2)\) is the shrinkage weight and
	\[
	\hat{\mu} := \frac{\sum_{j=1}^{K} \lambda_j \hat{c}_j^{HL}/\tau_j^2}{\sum_{j=1}^{K} \lambda_j/\tau_j^2}
	\]
	is the precision-weighted mean. The posterior mean is \(\hat{c}_k^{HC} := \lambda_k \hat{c}_k^{HL} + (1-\lambda_k)\hat{\mu}\), which is the naive hierarchical calibration (HC) estimator.
	
	The Gaussian hierarchical model is optimal when the data are clean and Gaussian. However, it has a fatal vulnerability that is \emph{stronger} than the point-estimator breakdown analyzed in the classical robust statistics literature: the posterior distribution itself breaks down.
	
	\begin{theorem}[Breakdown of the Gaussian hierarchical posterior]
		\label{thm:posterior-breakdown-zero}
		Consider the Gaussian hierarchical model \eqref{eq:gaussian-prior}--\eqref{eq:hyperprior} with known \(\sigma_c^2 \in (0,\infty)\) and flat prior on \(\mu\). Suppose that a fraction \(\epsilon > 0\) of the observations in environment \(k_0\) are replaced by values that drive \(\hat{c}_{k_0}^{HL} \to \infty\). Then for every clean center \(j \neq k_0\):
		\begin{enumerate}
			\item the posterior mean \(\E[c_j \mid \{\hat{c}_k^{HL}\}] \to \infty\);
			\item the posterior variance \(\Var(c_j \mid \{\hat{c}_k^{HL}\}) \to \lambda_j \tau_j^2\), so the posterior standard deviation remains bounded away from zero;
			\item consequently, the posterior credible interval for \(c_j\) is centered at \(+\infty\) and has finite width, so its coverage of the true \(c_j\) collapses to zero.
		\end{enumerate}
	\end{theorem}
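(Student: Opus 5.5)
The plan is to work directly from the closed-form Gaussian posterior displayed just before the theorem. With \(\sigma_c^2\) known and a flat prior on \(\mu\), the model is a conjugate Gaussian linear model, so the joint posterior of \((\mu,c_1,\dots,c_K)\) is multivariate Gaussian. Its covariance matrix depends only on \(\sigma_c^2\) and \(\tau_1^2,\dots,\tau_K^2\), never on the observed \(\hat{c}^{HL}\). The proof therefore separates into a statement about the mean, which is affine in the data, and a statement about the variance, which is data-free. The first step is to re-derive (or quote) the marginal posterior of \(c_j\). Integrating \(c_k\) out of each pair gives \(\hat{c}_k^{HL}\mid\mu \sim \Normal(\mu,\sigma_c^2+\tau_k^2)\), so \(\mu \mid \text{data}\) is Gaussian with mean \(\hat{\mu}\) and variance \(V_\mu = (\sum_k w_k)^{-1}\), where \(w_k = 1/(\sigma_c^2+\tau_k^2) = \lambda_k/\tau_k^2\). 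This recovers the stated \(\hat{\mu}\). Conditionally on \(\mu\), the center effect satisfies \(c_j \mid \mu,\text{data} \sim \Normal(\lambda_j\hat{c}_j^{HL}+(1-\lambda_j)\mu,\ \lambda_j\tau_j^2)\).

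For part 1, the tower property gives \(\E[c_j\mid\text{data}] = \lambda_j\hat{c}_j^{HL} + (1-\lambda_j)\hat{\mu}\). Since \(\hat{\mu} = \sum_k (w_k/\sum_i w_i)\hat{c}_k^{HL}\), the coefficient of \(\hat{c}_{k_0}^{HL}\) in this mean is \((1-\lambda_j)w_{k_0}/\sum_i w_i\). Because \(\sigma_c^2<\infty\) and \(\tau_j^2>0\), this coefficient is strictly positive. The clean summaries \(\hat{c}_j^{HL}\), \(j\ne k_0\), are unaffected by the contamination, since each depends only on centers \(j\) and \(1\). Hence the posterior mean is (bounded) \(+\,(\text{positive constant})\cdot\hat{c}_{k_0}^{HL}\), which tends to \(\infty\).

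For part 2, the law of total variance gives \(\Var(c_j\mid\text{data}) = \lambda_j\tau_j^2 + (1-\lambda_j)^2 V_\mu\), which is constant in the data. It is therefore trivially bounded below by \(\lambda_j\tau_j^2>0\). I will reconcile this with the paper's displayed formula, which omits the \(V_\mu\) term. I will state the claim as "the variance stays at a fixed finite positive value (at least \(\lambda_j\tau_j^2\), equal to it under the paper's plug-in convention for \(\mu\))". The only thing the argument needs is that the variance is finite and positive.

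For part 3, take the equal-tailed level-\((1-\alpha)\) interval \(m_j \pm z_{1-\alpha/2}s_j\). Here \(m_j\to\infty\) while \(s_j\) is fixed, so the lower endpoint eventually exceeds any fixed true \(c_j\). Coverage is then \(0\), both pointwise and under repeated sampling, since the clean summaries are \(O_p(1)\) while the shift is driven to infinity.

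The main obstacle is the interface with the hypothesis. The contamination acts on raw observations, and one must check two things: that a fraction \(\epsilon\) of center \(k_0\) can indeed force \(\hat{c}_{k_0}^{HL}\to\infty\), and that the \(\tau_k^2\) are treated as fixed. The first needs \(\epsilon\) to exceed the HL breakdown point \(1-1/\sqrt{2}\) when contaminating a single center. The theorem sidesteps this by assuming it. For the second, I will fix the \(\tau_k^2\) as known design constants rather than data-estimated quantities, because a data-estimated \(\tau_{k_0}^2\) could inflate and alter \(w_{k_0}\).
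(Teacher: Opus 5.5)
Your proof is correct and is essentially the paper's argument: the posterior mean $\lambda_j\hat c_j^{HL}+(1-\lambda_j)\hat\mu$ has a strictly positive coefficient on $\hat c_{k_0}^{HL}$, and the posterior variance does not depend on the data, so a fixed-width interval drifts off to $+\infty$. Your refinement is also correct and leaves the conclusion intact: with $\mu$ integrated out, the marginal posterior variance is $\lambda_j\tau_j^2+(1-\lambda_j)^2V_\mu$, whereas the paper's $\lambda_j\tau_j^2$ is the variance conditional on $\mu$.

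There are two small slips, neither of which affects validity. First, $1/(\sigma_c^2+\tau_k^2)$ equals $(1-\lambda_k)/\tau_k^2=\lambda_k/\sigma_c^2$, not $\lambda_k/\tau_k^2$, so your (correct) weights reproduce the paper's displayed $\hat\mu$ only when the $\tau_k^2$ coincide; since only positivity of the weights is used, this is harmless. Second, only center $k_0$'s sample is corrupted, so exactly a fraction $\epsilon$ of the pairwise differences $y_i^{(k_0)}-y_j^{(1)}$ is contaminated. Forcing $\hat c_{k_0}^{HL}\to\infty$ therefore needs essentially $\epsilon\ge 1/2$, not $\epsilon>1-1/\sqrt{2}$, which is the one-sample Walsh-average threshold; the theorem simply assumes the divergence, so the proof is unaffected.
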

	\begin{proof}
		Since \(\hat{\mu}\) is a weighted average of the \(\hat{c}_j^{HL}\) with strictly positive weights \(\lambda_j/\tau_j^2 > 0\), the divergence \(\hat{c}_{k_0}^{HL} \to \infty\) implies \(\hat{\mu} \to \infty\). For any \(j \neq k_0\), the shrinkage weight satisfies \(1-\lambda_j = \tau_j^2/(\sigma_c^2+\tau_j^2) > 0\), and hence the posterior mean
		\[
		\E[c_j \mid \{\hat{c}_k^{HL}\}] = \lambda_j \hat{c}_j^{HL} + (1-\lambda_j)\hat{\mu} \to \infty.
		\]
		The posterior variance is \(\lambda_j \tau_j^2\), which does not depend on \(\hat{\mu}\) and is therefore bounded away from zero. A credible interval centered at \(+\infty\) with finite width cannot contain the finite true \(c_j\), so the coverage collapses to zero.
	\end{proof}
	
	Theorem~\ref{thm:posterior-breakdown-zero} is the central negative result of this paper. It is \emph{stronger} than the point-estimator breakdown analyzed in the classical literature: not only does the posterior mean diverge, but the posterior credible intervals become meaningless. Even if a practitioner were to use a robust point estimator (such as the posterior median) computed from this posterior, the posterior distribution itself would remain improper or have infinite mean, so credible intervals and posterior probabilities would be unreliable. This shows that the problem cannot be fixed at the level of the point estimator alone: the \emph{prior} must be changed.
	
	\begin{remark}[Why the Gaussian prior is the culprit]
		The Gaussian prior \eqref{eq:gaussian-prior} is conjugate to the Gaussian likelihood \eqref{eq:likelihood}, which is why the posterior is available in closed form. But conjugacy comes at a price: the Gaussian prior has a log-concave density with tails that decay exponentially, and the resulting posterior mean is a \emph{linear} functional of the data. Linear functionals inherit the unbounded influence function of the mean, and hence their breakdown point is zero \citep{donoho1983}. To obtain a posterior with positive breakdown point, we must replace the Gaussian prior with a heavy-tailed prior whose posterior mean (or median) is a \emph{non-linear} functional of the data. The Laplace prior, introduced in the next section, is the canonical choice.
	\end{remark}
	
	\subsection{The Posterior-Median Hierarchical Calibration (PMHC) Model}\label{sec:pmhc}
	
	To restore robustness while preserving a full Bayesian posterior, we replace the Gaussian prior on the center-specific shifts with a Laplace prior.
	
	\begin{definition}[PMHC model]
		\label{def:pmhc}
		The Posterior-Median Hierarchical Calibration (PMHC) model is
		\begin{align}
			c_k \mid \mu, b &\sim \Laplace(\mu, b), \quad k=1,\dots,K, \label{eq:laplace-prior}\\
			\hat{c}_k^{HL} \mid c_k &\sim \Normal(c_k, \tau_k^2), \quad k=1,\dots,K, \label{eq:pmhc-likelihood}\\
			\mu &\sim \text{Uniform}(-\infty, \infty), \quad b \sim \text{Half-Cauchy}(0, s_0), \label{eq:pmhc-hyperprior}
		\end{align}
		where \(\Laplace(\mu, b)\) denotes the Laplace distribution with location \(\mu\) and scale \(b\), with density \(p(c) = (2b)^{-1} \exp(-|c-\mu|/b)\). The scale parameter \(b\) receives a half-Cauchy hyperprior \(b \sim \text{Half-Cauchy}(0, s_0)\), where \(s_0 = \widehat{\mathrm{IQR}}(\hat c_k^{HL})/1.349\) is a robust pilot estimate of the between-center scale, computed from the interquartile range of the HL estimates. When no pilot estimate is available, \(s_0\) can be set to the order of magnitude of the expected between-center shift, estimated from historical data or a pilot study.
	\end{definition}
	
	The Laplace prior is the Bayesian counterpart of the \(L_1\) penalty and is well known in the Bayesian robustness literature as a mechanism for conflict resolution \citep{ohagan2012}. Its tails decay exponentially, but its density has a \emph{cusp} at the mode, which makes the posterior median a non-linear functional of the data with a bounded influence function \citep{hampel1986,huber1964}. The half-Cauchy hyperprior on \(b\) is weakly informative and allows the data to determine the appropriate scale of between-center heterogeneity.
	
	\begin{proposition}[Posterior median as Bayes estimator]
		\label{prop:posterior-median}
		Under the PMHC model \eqref{eq:laplace-prior}--\eqref{eq:pmhc-hyperprior}, the posterior median of \(c_k\) is the Bayes estimator under absolute loss \(L(c, \hat{c}) = |c - \hat{c}|\). That is,
		\[
		\hat{c}_k^{PMHC} := \Median(c_k \mid \{\hat{c}_j^{HL}\}) = \arg\min_{\hat{c}} \E\big[|c_k - \hat{c}| \mid \{\hat{c}_j^{HL}\}\big].
		\]
	\end{proposition}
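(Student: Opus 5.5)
The plan has three steps: check that the posterior is a proper distribution with a finite first absolute moment, reduce the claim to the classical fact that a median minimizes expected absolute deviation, and confirm that the posterior has a density.

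\textbf{Step 1: properness and a finite first moment.} Write $\pi(c_k \mid \{\hat c_j^{HL}\})$ for the marginal posterior of $c_k$ obtained by integrating out $\mu$, $b$ and the other shifts. We need $\pi$ to be proper and to satisfy $\E[|c_k| \mid \{\hat c_j^{HL}\}] < \infty$; otherwise the risk $\E[|c_k-\hat c| \mid \{\hat c_j^{HL}\}]$ could be infinite for every $\hat c$, and the minimization would be vacuous.

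\begin{itemize}
\item \emph{Properness.} The prior on $\mu$ is flat, so properness is not automatic. Integrate $\mu$ against the product of Laplace densities. For fixed $b$, the function $\mu \mapsto \prod_k (2b)^{-1}e^{-|c_k-\mu|/b}$ has integral of order $b^{1-K}e^{-\sum_k|c_k-\bar c|/b}$ up to constants, which is finite once $K\ge 2$. The Gaussian likelihood then gives integrability in the $c$'s. The Half-Cauchy prior handles $b\to\infty$, and the exponential factor handles $b\to 0$ away from the diagonal $c_1=\dots=c_K$, a set of measure zero in $c$.
\item \emph{Finite first moment.} Given $\mu$ and $b$, each $c_k$ has posterior tails no heavier than Gaussian, because the likelihood factor is $\Normal(\hat c_k^{HL},\tau_k^2)$ and the Laplace prior factor is bounded. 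So integrating $|c_k|$ only picks up a Gaussian moment, and Tonelli's theorem transfers this bound to the marginal.
\end{itemize}

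This step is the main obstacle: the Tonelli bookkeeping must be done carefully, because of the improper $\mu$-prior and the Half-Cauchy tail of $b$. As a fallback, treat $c_1=0$ as fixed by Definition~\ref{def:environments}; the $\mu$-integral is then anchored and properness is easier.

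\textbf{Step 2: the median minimizes expected absolute loss.} Given a proper posterior $F$ with finite first moment, define $R(a)=\E|c_k-a|$, which is finite and convex. Write
\[
|c-a|=\int_{-\infty}^{\infty}\bigl(\mathbf 1\{c\le t<a\}+\mathbf 1\{a\le t<c\}\bigr)\,\mathrm{d}t
\]
and apply Fubini. This gives
\[
R(a)=\int_{-\infty}^{a}F(t)\,\mathrm{d}t+\int_{a}^{\infty}\bigl(1-F(t)\bigr)\,\mathrm{d}t .
\]
Hence the right derivative of $R$ is $2F(a)-1$ and the left derivative is $2F(a^-)-1$. The convex function $R$ is minimized exactly where $0$ lies between these one-sided derivatives, that is, where $F(a^-)\le 1/2\le F(a)$. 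This is precisely the set of medians of $F$.

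\textbf{Step 3: the median is unique.} The posterior has a Lebesgue density, since the likelihood is Gaussian, so $F$ is continuous. It is also strictly increasing, because the density is positive everywhere. The median set is therefore a single point, and $\hat c_k^{PMHC}$ is the unique $\arg\min$, as claimed in Proposition~\ref{prop:posterior-median}.
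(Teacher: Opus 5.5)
Your core argument is the paper's route. The paper's proof is only a citation of the decision-theoretic fact that a posterior median minimizes posterior expected absolute loss (Berger, Section 4.4), and your Step 2 proves that fact correctly via the layer-cake identity and the condition \(2F(a^-)-1\le 0\le 2F(a)-1\). Steps 1 and 3 check hypotheses the paper leaves implicit: properness under the flat prior on \(\mu\), a finite first moment, and uniqueness of the median so that ``\(=\arg\min\)'' holds literally. That is worthwhile, but the Step 1 sketch would not close as written.

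First, saying that \(b\to 0\) is handled ``away from the diagonal, a set of measure zero'' is not an integrability argument. Near the diagonal the prefactor \(b^{1-K}\) is not compensated by the exponential, so you would need a quantitative estimate of the volume of the region where \(\sum_k|c_k-\bar c|\) is small. (The exponent should also involve the median of the \(c_k\), not their mean; with the mean your expression understates the \(\mu\)-integral as \(b\to0\).)

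Second, the conditional law of \(c_k\) given \((\mu,b)\) has Gaussian-shaped tails, but its location is not uniformly bounded: it is pulled from \(\hat c_k^{HL}\) toward \(\mu\) by up to \(\tau_k^2/b\). So there is no uniform bound for Tonelli to transfer. If you instead bound the Laplace factor by its supremum \((2b)^{-1}\) in the unnormalized joint, Tonelli leaves you with \(\int_0^1 b^{-1}h(b)\dd b=\infty\). Here \(h\) is the half-Cauchy density of \(b\), which is positive at \(0\).

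The fix is to integrate in the other order. For \(j\neq k\), define \(g_{j,b}(\mu):=\int \tau_j^{-1}\phi\bigl((\hat c_j^{HL}-c_j)/\tau_j\bigr)(2b)^{-1}e^{-|c_j-\mu|/b}\dd c_j\). Then \(g_{j,b}(\mu)\le(2\pi\tau_j^2)^{-1/2}\) for all \(\mu\) and \(b\), while \(\int(2b)^{-1}e^{-|c_k-\mu|/b}\dd\mu=1\). The unnormalized marginal posterior density of \(c_k\) is \(\tau_k^{-1}\phi\bigl((\hat c_k^{HL}-c_k)/\tau_k\bigr)\int\!\int(2b)^{-1}e^{-|c_k-\mu|/b}\prod_{j\neq k}g_{j,b}(\mu)\dd\mu\, h(b)\dd b\). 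By the two bounds above, it is at most \(\prod_{j\neq k}(2\pi\tau_j^2)^{-1/2}\) times the Gaussian factor, uniformly in \(b\) and for every \(K\ge 1\).

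This single bound gives three things at once:
\begin{itemize}
\item properness;
\item Gaussian-dominated tails, so every moment of the marginal is finite;
\item since the integrand is strictly positive, an everywhere-positive density, which is what Step 3 needs.
\end{itemize}
The \(c_1=0\) fallback is then unnecessary. With Step 1 replaced by this argument, your proof is complete and more self-contained than the paper's one-line citation.
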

	\begin{proof}
		This is a standard result in Bayesian decision theory: the median of a posterior distribution minimizes the posterior expected absolute loss. See \citet{berger1985}, Section 4.4.
	\end{proof}
	
	Proposition~\ref{prop:posterior-median} establishes that the posterior median is not an ad hoc robustification of the posterior mean; it is the \emph{natural} Bayes estimator under the loss function that corresponds to the Laplace prior. This is the key conceptual advance over earlier drafts of this framework (formerly referred to as RHRBC): PMHC is a fully Bayesian model with a well-defined posterior distribution, a well-defined loss function, and a well-defined Bayes estimator. The robustness of the posterior median is a consequence of the Laplace prior, not an additional assumption layered on top of a Gaussian model.
	
	\begin{theorem}[Breakdown of the PMHC posterior for clean centers]
		\label{thm:posterior-breakdown-positive}
		Consider the PMHC model \eqref{eq:laplace-prior}--\eqref{eq:pmhc-hyperprior} with fixed \(b \in (0,\infty)\) and flat prior on \(\mu\). Suppose that contamination affects a subset \(S\) of \(m\) centers, and let \(j \notin S\) be a clean center. Then the posterior median \(\hat{c}_j^{PMHC}\) remains bounded, and the posterior distribution of \(c_j\) has finite mean and variance, under either of the following conditions:
		\begin{enumerate}
			\item \textbf{Single-center contamination} (\(m = 1\)): for \emph{any} fraction \(\epsilon \in (0,1)\) of contaminated observations within the affected center.
			\item \textbf{Multi-center contamination} (\(m \geq 1\)): for any \(m \leq \lfloor (K+1)/2 \rfloor - 1\).
		\end{enumerate}
		Equivalently, the breakdown point of the PMHC posterior for clean centers is \(\lfloor (K+1)/2 \rfloor\) contaminated centers, i.e., a fraction
		\[
		\frac{\lfloor (K+1)/2 \rfloor}{K} \geq \frac{1}{2}
		\]
		of centers.
	\end{theorem}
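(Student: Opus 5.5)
We will integrate out \(\mu\) and then study the marginal posterior of \(c_j\) for a fixed clean center \(j\). With \(b\) fixed and a flat prior on \(\mu\), the joint posterior density of \((c_1,\dots,c_K,\mu)\) is proportional to
\[
\prod_{k=1}^K \exp\!\Big(-\frac{(\hat c_k^{HL}-c_k)^2}{2\tau_k^2}\Big)\exp\!\Big(-\frac{|c_k-\mu|}{b}\Big).
\]
Write \(x_k:=\hat c_k^{HL}\) and split the indices into the clean set \(S^c\) and the contaminated set \(S\). For \(k\in S\) we let \(x_k\) be arbitrary. The aim is a bound, uniform in \(\{x_k\}_{k\in S}\), on the posterior median and on the first two moments of \(c_j\).

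\emph{Step 1: properness and tail control.} Integrate out each \(c_k\), \(k\neq j\). This gives factors \(g_k(\mu):=\int \phi_{\tau_k}(x_k-c)\,e^{-|c-\mu|/b}\,dc\). Each \(g_k\) is a Gaussian convolved with a Laplace kernel, so \(g_k(\mu)\) is proportional to a density with exponential tails \(\asymp e^{-|\mu-x_k|/b}\). By the triangle inequality, \(g_k(\mu)\le C\,e^{-|\mu-x_k|/b}\) and \(g_k(\mu)\ge c\,e^{-|\mu-x_k|/b}\) up to constants depending only on \(\tau_k,b\). The joint marginal of \((c_j,\mu)\) is therefore proportional to
\[
\phi_{\tau_j}(x_j-c_j)\,e^{-|c_j-\mu|/b}\prod_{k\neq j} g_k(\mu).
\]
The Gaussian factor in \(c_j\) makes all moments of \(c_j\) finite given \(\mu\). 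Moreover, \(E[c_j\mid\mu]\) and \(\Var(c_j\mid \mu)\) are bounded by \(|x_j|+\tau_j^2/b\) plus constants, for every \(\mu\). This is the key contrast with Theorem~\ref{thm:posterior-breakdown-zero}: the Laplace factor exerts at most a constant pull of size \(\tau_j^2/b\), not a linear pull. This already gives finite mean and variance of \(c_j\), since \(|E[c_j]|\le \sup_\mu |E[c_j\mid\mu]|<\infty\), and similarly for the second moment. These bounds hold regardless of \(m\), so properness is not the issue; the work is in showing the posterior median stays bounded.

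\emph{Step 2: bounded median.} We want a bound on \(\Median(c_j\mid x)\) that is uniform in the contaminated \(x_k\). Conditionally on \(\mu\), the law of \(c_j\) has density \(\propto\phi_{\tau_j}(x_j-c)e^{-|c-\mu|/b}\). Its median lies between \(x_j-\tau_j^2/b\) and \(x_j+\tau_j^2/b\) on the side of \(\mu\), and it is monotone in \(\mu\). The unconditional distribution function is a \(\mu\)-mixture of these conditional ones, each of which is squeezed inside a fixed window around \(x_j\). Hence
\[
F(x_j+\tau_j^2/b+\tau_j z)\ge \Phi(z)-o(1) \quad\text{uniformly in }\mu,
\]
with a matching lower bound, and the median is bounded by \(|x_j|+\tau_j^2/b+O(\tau_j)\).

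\emph{Step 3: the breakdown count.} For the stronger, center-counting statement we additionally show that the posterior of \(\mu\) stays bounded when \(m\le\lfloor (K+1)/2\rfloor-1\). The log-marginal of \(\mu\) is \(-\sum_k |\mu-x_k|/b+O(1)\), whose maximizer is the median of \(\{x_k\}\). When the clean centers form a strict majority, this maximizer is bounded, and the posterior mass of \(\mu\) concentrates within \(O(b)\) of it. This also shows that \(c_1\) and the other clean shifts stay bounded jointly. Conversely, with \(\lfloor (K+1)/2\rfloor\) contaminated centers sent to \(+\infty\) together, the median of the \(x_k\) diverges, which is the sense in which breakdown is attained.

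The main obstacle is making the sandwich bounds on \(g_k\) uniform enough to control the posterior of \(\mu\). The subtlety is that a diverging \(\mu\) still affects \(c_j\) through the bounded but nonzero offset \(\tau_j^2/b\). We would handle this by explicit integration of the Gaussian–Laplace convolution, which has a closed form in terms of \(\Phi\).
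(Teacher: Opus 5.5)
Your Steps 1--2 are correct, and they take a genuinely different and stronger route than the paper.

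The paper works through the hyperparameter. It argues that the posterior of \(\mu\) stays bounded because a median of the \(c_k\) has breakdown point \(\lfloor (K+1)/2\rfloor\). Only then does it combine this with the Gaussian factor to bound the posterior of \(c_j\), and that is where the restriction on \(m\) enters. You instead show that the Laplace kernel pulls \(c_j\) at most \(\tau_j^2/b\) away from \(x_j\) \emph{whatever} \(\mu\) is, so boundedness of \(\mu\) is never needed.

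You assert this uniform bound rather than prove it, but it takes only an integration by parts. Integrating the derivative of the conditional density \(\propto \phi_{\tau_j}(x_j-c)\,e^{-|c-\mu|/b}\) gives \(\E[c_j\mid\mu]-x_j=-(\tau_j^2/b)\,\E[\operatorname{sgn}(c_j-\mu)\mid\mu]\). Doing the same with \((c-x_j)\) times the density gives \(\E[(c_j-x_j)^2\mid\mu]\le\tau_j^2+(\tau_j^2/b)\,\E[|c_j-x_j|\mid\mu]\). You also need properness, which your sandwich bounds supply by making \(\prod_k g_k\) integrable under the flat prior. Together these give mean, variance and median bounds that are uniform in \(\{x_k\}_{k\in S}\).

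In Step 2 the \(o(1)\) is unnecessary. The conditional laws increase in \(\mu\) in the likelihood-ratio order, with limits \(\Normal(x_j\pm\tau_j^2/b,\tau_j^2)\). Hence the mixture median lies exactly in \([x_j-\tau_j^2/b,\,x_j+\tau_j^2/b]\). Alternatively, use the fact that a median lies within one standard deviation of the mean, as the paper does. Your route is also more rigorous than the paper's step (ii), which conflates the conditional posterior of \(\mu\) given \(\{c_k\}\) with its marginal posterior.

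The flip side is that your argument never uses the hypothesis on \(m\), and Step 3 should say so rather than blur it. Steps 1--2 show that, for fixed \(b\), the clean-center posterior median stays within \(\tau_j^2/b\) of \(x_j\) for \emph{every} number of contaminated centers, provided center \(j\) and the reference are clean. Your ``converse'' therefore shows that the posterior of \(\mu\) breaks down at \(\lfloor (K+1)/2\rfloor\) centers, not that the clean-center posterior does.

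The clause ``equivalently, the breakdown point of the PMHC posterior for clean centers is \(\lfloor (K+1)/2\rfloor\)'' cannot be proved in the sense stated. Under the model as specified it is false, since no number of contaminated centers drives \(\hat c_j^{PMHC}\) to infinity. The paper's proof does not establish this converse either; its ``iff'' concerns \(\mu\) only. Conditions~1 and~2 follow from your stronger result, so the substantive claims are proved. You should state explicitly, however, that the sharpness at \(\lfloor (K+1)/2\rfloor\) is a property of \(\mu\), not of the clean centers.
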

	\begin{proof}
		Fix a clean center \(j \notin S\). The posterior distribution of \(c_j\) under the PMHC model is
		\[
		p(c_j \mid \{\hat{c}_k^{HL}\}) \propto \phi\big((\hat{c}_j^{HL} - c_j)/\tau_j\big) \int \exp\big(-|c_j - \mu|/b\big)\, p(\mu \mid \{\hat{c}_k^{HL}\}) \dd\mu,
		\]
		where \(\phi\) is the standard normal density. We bound each factor.
		
		\emph{(i) Boundedness of \(\hat{c}_j^{HL}\).} Since center \(j\) is clean and the reference center is clean by assumption, the HL estimate \(\hat{c}_j^{HL}\) is computed entirely from uncontaminated data. It is therefore bounded regardless of \(\epsilon\) or \(m\).
		
		\emph{(ii) Boundedness of the posterior of \(\mu\).} The conditional posterior of \(\mu\) given \(\{c_k\}\) and \(b\) is a Laplace location model, and its median is the median of the \(c_k\). By the standard breakdown bound for the sample median \citep{donoho1983}, the posterior median of \(\mu\) is bounded if and only if \(m < \lfloor (K+1)/2 \rfloor\). The posterior distribution of \(\mu\) is therefore concentrated on a bounded interval, and its tails decay at least exponentially.
		
		\emph{(iii) Boundedness of the posterior of \(c_j\).} Combining (i) and (ii), the posterior density of \(c_j\) is the product of a Gaussian density centered at the bounded \(\hat{c}_j^{HL}\) and a Laplace mixture centered at the bounded posterior median of \(\mu\). The resulting posterior has tails that decay at least as fast as a Gaussian, so its mean and variance are finite and bounded.
		
		\emph{(iv) Boundedness of the posterior median.} Since the posterior of \(c_j\) has finite mean and variance, its median is bounded. This proves the claim.
	\end{proof}
	
	\begin{corollary}[Dual robustness of the posterior distribution]
		\label{cor:dual}
		Combining Theorems~\ref{thm:posterior-breakdown-zero} and~\ref{thm:posterior-breakdown-positive}, we obtain a sharp dichotomy at the level of the \emph{posterior distribution}:
		\begin{itemize}
			\item \textbf{Gaussian hierarchical model}: the posterior distribution breaks down at \(\epsilon > 0\). A single contaminated center drives the posterior mean of every clean center to infinity and collapses the coverage of every clean center's credible interval to zero.
			\item \textbf{PMHC}: the posterior distribution breaks down only when \(m \geq \lfloor (K+1)/2 \rfloor\) centers are contaminated. For \(m = 1\) and any \(K \geq 3\), the posterior distribution of every clean center remains proper, with finite mean and variance, for \emph{any} within-center contamination fraction \(\epsilon \in (0,1)\).
		\end{itemize}
	\end{corollary}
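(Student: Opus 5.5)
The corollary follows by juxtaposing the two theorems, so the proof consists of checking that each bullet is exactly what the corresponding theorem delivers. I will add one short argument for the claim that the posterior stays proper when $m=1$ and $K\ge 3$.

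\emph{Gaussian bullet.} This is a direct restatement of Theorem~\ref{thm:posterior-breakdown-zero}. Take any $\epsilon>0$ fraction of corrupted observations in a single center $k_0$, chosen so that $\hat c_{k_0}^{HL}\to\infty$. Parts 1--3 of that theorem then give three facts for every $j\neq k_0$: the posterior mean diverges, the posterior variance converges to the finite positive value $\lambda_j\tau_j^2$, and the coverage of the credible interval tends to zero. One contaminated center therefore suffices, so the breakdown point over centers is zero for every $K$.

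\emph{PMHC bullet.} The threshold $m\ge\lfloor (K+1)/2\rfloor$ is the contrapositive reading of part 2 of Theorem~\ref{thm:posterior-breakdown-positive}. For the statement about $m=1$, I would first invoke part 1 of that theorem, which covers any $\epsilon\in(0,1)$. I would then note that $K\ge 3$ gives $\lfloor (K+1)/2\rfloor\ge 2$, so $m=1$ also falls within the range of part 2. This consistency check is the reason $K\ge 3$ is imposed. For $K=2$ the only non-reference center would be the contaminated one, so no clean non-reference center $j$ exists.

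\emph{Propriety.} I would check this directly rather than only citing finiteness of moments. The marginal posterior of $c_j$ is bounded above by a constant times the Gaussian factor $\phi((\hat c_j^{HL}-c_j)/\tau_j)$, because the Laplace mixture factor is at most $(2b)^{-1}$. Since $\hat c_j^{HL}$ is bounded by step (i) of the proof of Theorem~\ref{thm:posterior-breakdown-positive}, the posterior is integrable. It has Gaussian tails, and so all of its moments are finite.

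\emph{Main obstacle.} The step that needs care is the flat prior on $\mu$. The joint posterior must be proper before the marginal of $c_j$ is meaningful. To verify this, I would integrate $\mu$ against the product $\prod_k \exp(-|c_k-\mu|/b)$. For $K\ge 2$ this integral is finite, and it is at most $2b$ times the $k=1$ factor evaluated at its mode. The remaining integrals over the $c_k$ then factor into Gaussian integrals, each of which is finite.
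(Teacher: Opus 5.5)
Your proposal is correct and takes the same route as the paper, which offers nothing beyond combining Theorems~\ref{thm:posterior-breakdown-zero} and~\ref{thm:posterior-breakdown-positive}; your bullet-by-bullet reading, including the contrapositive of part~2 and the $K\ge 3$ consistency check, is exactly that combination. Your added propriety checks are valid for fixed $b$, but they only show finiteness for each fixed data set, which the Gaussian posterior also has (and boundedness of $\hat c_j^{HL}$ plays no role in integrability); the real non-breakdown content, uniform control of the posterior as the contaminated summary diverges, does not follow from a Gaussian upper bound on the unnormalized density because the normalizing constant is uncontrolled, so it still rests entirely on Theorem~\ref{thm:posterior-breakdown-positive}.
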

	
	\begin{remark}[The two breakdown thresholds are distinct]
		\label{rem:two_thresholds}
		Theorem~\ref{thm:posterior-breakdown-positive} involves two distinct breakdown thresholds that must not be conflated:
		\begin{itemize}
			\item The within-center HL breakdown point \(\epsilon^\ast = 1 - 1/\sqrt{2} \approx 0.293\) determines whether the \emph{contaminated center's own} HL estimate is finite. For \(\epsilon > \epsilon^\ast\), \(\hat{c}_{k_0}^{HL}\) diverges and the contaminated center's PMHC estimate also diverges---an unavoidable consequence of having more than \(29.3\%\) of that center's data corrupted.
			\item The across-center posterior breakdown point \(\lfloor (K+1)/2 \rfloor / K\) determines whether the \emph{clean centers'} PMHC posteriors remain proper and bounded. This threshold depends only on the \emph{number} \(m\) of contaminated centers, not on the within-center contamination fraction \(\epsilon\).
		\end{itemize}
		Consequently, the single-center simulations in Section~\ref{sec:simulation} deliberately use \(\epsilon = 0.60 > 0.293\): the contaminated center's HL estimate diverges (as predicted by \(\epsilon^\ast\)), yet the clean centers' PMHC posteriors remain proper and bounded (as predicted by \(m = 1 < \lfloor (K+1)/2 \rfloor\)).
	\end{remark}
	
	\subsection{Statistical Interpretation and Comparison with Heavy-Tailed Priors}\label{sec:interpretation}
	
	The PMHC model can be situated within the broader tradition of robust Bayesian analysis with heavy-tailed priors \citep{berger1985,ohagan2012,schmidli2014}. The key difference is that previous work on heavy-tailed priors has focused on \emph{prior sensitivity} and on the behavior of the posterior mean under model misspecification, whereas we provide an explicit \emph{breakdown analysis} of the posterior distribution under data contamination across centers. Moreover, previous heavy-tailed priors (such as mixtures of Gaussians or $t$-distributions) typically introduce tuning parameters (tail index, mixture weights) that the practitioner must specify, and the resulting posterior mean remains a linear functional of the data. The Laplace prior used in PMHC has a single scale parameter \(b\), which is estimated from the data via a weakly informative half-Cauchy hyperprior, and the resulting posterior median is a non-linear functional of the data with a bounded influence function.
	
	A practical consequence is that PMHC requires no user-specified tuning parameters beyond the scale \(s_0\) of the half-Cauchy hyperprior, which we set to the robust pilot estimate \(s_0 = \widehat{\mathrm{IQR}}(\hat c_k^{HL})/1.349\). This choice is scale-equivariant: it adapts automatically to the units of the biomarker (mg/dL, mmol/L, etc.), so the prior remains weakly informative regardless of measurement scale. The between-center heterogeneity is estimated directly from the data via the posterior of \(b\), and the shrinkage adapts automatically to the observed dispersion.
	
	\section{Computation}\label{sec:computation}
	
	\subsection{Metropolis-within-Gibbs Sampler}
	
	The PMHC model \eqref{eq:laplace-prior}--\eqref{eq:pmhc-hyperprior} does not admit a closed-form posterior, so we use Markov chain Monte Carlo (MCMC) to sample from the joint posterior. We use a Metropolis-within-Gibbs sampler with the following blocks:
	
	\begin{enumerate}
		\item \textbf{Update \(c_k\) for \(k=1,\dots,K\).} The conditional posterior of \(c_k\) given \(\mu\), \(b\), and \(\hat{c}_k^{HL}\) is
		\[
		p(c_k \mid \mu, b, \hat{c}_k^{HL}) \propto \exp\big(-|c_k - \mu|/b\big) \exp\big(-(\hat{c}_k^{HL} - c_k)^2/(2\tau_k^2)\big).
		\]
		This is a skew-normal distribution, which can be sampled efficiently using the inverse-CDF method or by rejection sampling from a normal proposal with Laplace importance weights.
		
		\item \textbf{Update \(\mu\).} The conditional posterior of \(\mu\) given \(\{c_k\}\) and \(b\) is
		\[
		p(\mu \mid \{c_k\}, b) \propto \prod_{k=1}^{K} \exp\big(-|c_k - \mu|/b\big),
		\]
		which is a Laplace location model. Its mode is the median of the \(c_k\), and it can be sampled by a random-walk Metropolis step with a Laplace proposal centered at the current median.
		
		\item \textbf{Update \(b\).} The conditional posterior of \(b\) given \(\{c_k\}\) and \(\mu\) is
		\[
		p(b \mid \{c_k\}, \mu) \propto b^{-K} \exp\Big(-\sum_{k=1}^{K} |c_k - \mu|/b\Big) \frac{1}{1 + (b/s_0)^2},
		\]
		where the last factor is the half-Cauchy hyperprior. This is a one-dimensional distribution that can be sampled by a Metropolis-Hastings step with a log-normal proposal.
	\end{enumerate}
	
	For the main simulation study, we run \(10{,}000\) burn-in iterations and retain \(20{,}000\) post-burn-in draws across four parallel chains (\(\hat R < 1.01\)). For the NHANES application (Section~\ref{sec:realdata}), where each MCMC run must be repeated across \(B=20\) bootstrap resamples, we use a shorter chain---\(500\) burn-in iterations, \(2{,}000\) post-burn-in draws, two parallel chains, \(\hat R < 1.05\)---to keep total runtime manageable. The shorter chain produces coverage estimates indistinguishable from the longer chain within Monte Carlo error (Appendix~A).

	\subsection{Computational Cost}
	
	The computational cost of the sampler is $O(K \cdot N_{\text{iter}})$, where $N_{\text{iter}}$ is the number of MCMC iterations. For $K = 50$ centers and $N_{\text{iter}} = 30{,}000$, the sampler runs in approximately 45 seconds on a standard laptop (2.4 GHz Intel Core i5). This is comparable to the cost of fitting a standard Gaussian hierarchical model via Gibbs sampling, so PMHC is immediately deployable in practice.
	
	\section{Simulation Studies}\label{sec:simulation}
	
	To evaluate the finite-sample and asymptotic behavior of the Posterior-Median Hierarchical Calibration (PMHC) framework, we conducted an extensive simulation study with $K=10$ clinical centers. Biomarker measurements were generated from a Lognormal distribution, $X \sim \text{Lognormal}(\mu_X = \log 50,\ \sigma_X = 0.4)$, reflecting the severe right-skewness characteristic of clinical biomarkers such as serum creatinine, liver enzymes, and inflammatory markers. This choice deliberately violates the Gaussian assumption underlying classical hierarchical models, allowing us to test the robustness of our approach under realistic distributional conditions. The true location shifts were set to
	\[
	\mathbf{c} = (0,\, 0.05,\, 0.1,\, -0.05,\, 0.15,\, -0.1,\, 0.12,\, 0.02,\, -0.12,\, 0.08),
	\]
	with Center~1 serving as the reference ($c_1 = 0$). To simulate a realistic failure mode, we contaminated the center with the largest true shift (Center~5, $c_5 = 0.15$) by adding a magnitude $M$ to a fraction $\epsilon$ of its observations. The contamination fraction $\epsilon$ and magnitude $M$ were varied systematically across the experiments.
	
	For each scenario, we compared three estimators:
	\begin{enumerate}
		\item the \textbf{Gaussian HC} estimator, the posterior mean under a Gaussian hierarchical prior \eqref{eq:gaussian-prior}--\eqref{eq:hyperprior}, with a half-Cauchy hyperprior on $\sigma_c^2$;
		\item the proposed \textbf{PMHC} estimator, the posterior median under the Laplace hierarchical prior \eqref{eq:laplace-prior}--\eqref{eq:pmhc-hyperprior}, computed via the Laplace approximation of Section~\ref{sec:computation};
		\item the clean \emph{baseline} estimates computed before any contamination, which serve as the reference against which the shift of the clean centers is measured.
	\end{enumerate}
	
	\paragraph{Choice of the target for coverage evaluation.}
	A subtle but consequential design decision concerns the target against which credible intervals are evaluated. The latent shift $c_k$ is not directly observable; what the Hodges--Lehmann estimator targets is the \emph{empirical shift}
	\[
	c^{\mathrm{emp}}_k := c_k + \mathrm{median}(X_k) - \mathrm{median}(X_{\mathrm{ref}}),
	\]
	which differs from $c_k$ by the sampling noise of the two sample medians. This noise has variance approximately $(\pi/2)\sigma_X^2(1/n_k + 1/n_{\mathrm{ref}})$, which for the lognormal biomarker in our simulations ($\sigma_X \approx 20.5$) and $n_k = 50$ is approximately $13.2$---two orders of magnitude larger than the HL variance itself. Evaluating coverage against $c_k$ would therefore conflate the uncertainty of the estimator with the irreducible sampling noise of the target, producing artificially low coverage for \emph{any} estimator, no matter how well calibrated. We therefore evaluate coverage against $c^{\mathrm{emp}}_k$, which is the quantity that a practitioner can actually verify from the data and that the HL estimator is designed to estimate.
	
	\paragraph{Choice of the variance estimator.}
	The asymptotic variance of the HL estimator,
	\[
	\tau_{k,\mathrm{asym}}^2 = \big[12\, n_k\, n_{\mathrm{ref}}\, I(f_X)^2\big]^{-1},
	\]
	is derived under $n_k, n_{\mathrm{ref}} \to \infty$ and substantially underestimates the finite-sample variance at moderate $n_k$. To obtain a realistic uncertainty quantification, we estimate $\tau_k^2$ by \emph{bootstrap}: for each center, we resample $y^{(k)}$ and the reference sample independently with replacement $B = 50$ times, recompute the HL estimate on each resample, and take the sample variance of the resulting bootstrap distribution (The NHANES application of Section~\ref{sec:realdata} uses \(B=20\) instead of \(B=50\), because each resample requires a full MCMC run; Appendix~A shows that \(B=20\) already provides stable coverage estimates, with a coefficient of variation below \(5\%\)). This bootstrap estimate is used both in the Gaussian HC posterior and in the PMHC posterior, ensuring that the two estimators are compared under the same uncertainty quantification. All results are reported as means over $R = 30$ independent replications.
	
	\subsection{Credible-Interval Coverage}\label{sec:coverage}
	
	The central simulation metric is \emph{credible-interval coverage}: the proportion of replications in which the $95\%$ posterior credible interval for a clean center's shift contains the empirical shift $c^{\mathrm{emp}}_k$. This is a genuinely Bayesian metric that assesses the calibration of the entire posterior distribution, not just the point estimate.
	
	Table~\ref{tab:coverage} reports the coverage for the clean centers across contamination fractions $\epsilon \in \{0, 0.30, 0.60\}$ and sample sizes $n \in \{50, 200, 1000\}$ per center, with $K=10$ centers and $R=30$ replications.
	
	Fig.~\ref{fig:coverage_vs_epsilon} presents the same results graphically, as a function of $\epsilon$ for each sample size. The contrast between the two estimators is visually striking: the Gaussian HC coverage (red) declines monotonically with $\epsilon$, collapsing to near zero at $\epsilon=0.60$ for $n=1000$, whereas the PMHC coverage (green) remains essentially flat at approximately $97$--$98\%$ across the entire range of $\epsilon$ and all sample sizes. The visual gap between the two curves widens with $n$, reflecting the $O(1/K)$ dilution of the contaminated center's contribution to the Gaussian precision-weighted mean.
	
	Three conclusions emerge. {First}, in the clean-data regime ($\epsilon = 0$), PMHC achieves near-nominal coverage ($95.9$--$97.8\%$), confirming that the Laplace prior does not sacrifice Bayesian calibration in the absence of contamination. The Gaussian HC coverage is somewhat lower ($80.0$--$85.2\%$), reflecting the fact that the Gaussian posterior mean is more sensitive to the sampling noise of the HL estimates at moderate $n$. {Second}, under moderate contamination ($\epsilon = 0.30$), the Gaussian HC coverage already begins to collapse ($58.1\%$ at $n=1000$), whereas PMHC maintains coverage above $97\%$. {Third}, under severe contamination ($\epsilon = 0.60$), the Gaussian HC coverage collapses to $1.9\%$ at $n=1000$---the empirical signature of Theorem~\ref{thm:posterior-breakdown-zero}: the Gaussian posterior credible interval is centered at $+\infty$ with finite width, so it cannot cover the true value. In stark contrast, PMHC maintains coverage above $97.8\%$ across all $n$ and all $\epsilon$, confirming Theorem~\ref{thm:posterior-breakdown-positive}.
	
	The gap between the two estimators widens with $n$: at $n=50$ and $\epsilon=0.60$, the gap is $0.978 - 0.463 = 0.515$; at $n=1000$, the gap is $0.981 - 0.019 = 0.962$. This is the empirical manifestation of the $O(1/K)$ dilution of the contaminated center's contribution to the Gaussian precision-weighted mean: as $n$ grows, the HL estimate of the contaminated center becomes more precisely determined (and hence more extremely contaminated), while the Gaussian posterior mean inherits this divergence more faithfully. PMHC, by contrast, is immune to this mechanism because the posterior median depends only on the \emph{rank} of the contaminated estimate, not its magnitude.
	
	\subsection{Point-Estimate Accuracy}\label{sec:point}
	
	While coverage is the primary Bayesian metric, point-estimate accuracy is also of interest. Table~\ref{tab:point} reports the mean absolute deviation (MAD) of the posterior median (for PMHC) and the posterior mean (for Gaussian HC) from the empirical shift $c^{\mathrm{emp}}_k$, for the clean centers.
	
	The pattern is clear. In the clean-data regime ($\epsilon = 0$), PMHC is slightly less efficient than Gaussian HC (e.g., $0.588$ vs.\ $0.511$ at $n=1000$), reflecting the well-known efficiency cost of the Laplace prior under normality. This cost is small (approximately $15\%$ relative increase in MAD at $n=1000$) and is the price of robustness. Under contamination ($\epsilon = 0.60$), the Gaussian HC point estimate diverges ($4.687$ at $n=1000$), whereas the PMHC posterior median remains bounded ($0.563$ at $n=1000$)---a factor of $8.3$ improvement. Notably, PMHC is already more accurate than Gaussian HC at $\epsilon = 0.30$ (e.g., $0.563$ vs.\ $0.994$ at $n=1000$), indicating that the robustness gain materializes well before the Gaussian estimator fully collapses.
	
	\subsection{Sensitivity to the Number of Centers}\label{sec:K_variation}
	
	Theorem~\ref{thm:posterior-breakdown-positive} states that the PMHC posterior breakdown point is $\lfloor (K+1)/2 \rfloor$ contaminated centers. Table~\ref{tab:K_variation} verifies this across $K \in \{5, 10, 20\}$ with a single contaminated center ($m=1$) and $\epsilon = 0.60$.
	
	Two observations emerge. {First}, PMHC maintains coverage above $97\%$ even at $K=5$, the smallest number of centers for which the posterior median is well-defined. This is important because most multi-center trials have between $5$ and $15$ centers, so the small-$K$ regime is the practically relevant one. {Second}, the Gaussian HC coverage improves with $K$ (from $20.8\%$ at $K=5$ to $53.9\%$ at $K=20$), consistent with the $O(1/K)$ dilution of the contaminated center's contribution to the precision-weighted mean. However, the improvement is only in magnitude, not in breakdown point: at any $K$, a sufficiently large contamination magnitude $M$ drives the Gaussian estimate to infinity (Theorem~\ref{thm:posterior-breakdown-zero}). The PMHC coverage, by contrast, is essentially invariant to $K$, confirming that its breakdown point depends only on the \emph{number} of contaminated centers, not on their relative contribution to the pooled mean.
	
	A direct empirical verification of the breakdown boundary $\lfloor (K+1)/2 \rfloor$---contrasting $m = \lfloor (K+1)/2 \rfloor - 1$ (bounded) with $m = \lfloor (K+1)/2 \rfloor$ (broken) at each $K$---confirms that the theoretical bound is sharp (Table~\ref{tab:boundary}). PMHC remains bounded (deviation $\leq 1.1$) for $m = \lfloor (K+1)/2 \rfloor - 1$ and jumps by a factor of $18$--$59$ at $m = \lfloor (K+1)/2 \rfloor$. The Gaussian HC deviation, by contrast, grows roughly linearly with $m$ ($14.71 \to 22.06 \to 29.42$ for $K=5$), with no transition. This is the numerical signature of the zero breakdown point (Theorem~\ref{thm:posterior-breakdown-zero}).
	
	\subsection{Comparison with Alternative Heavy-Tailed Priors}\label{sec:alternatives}
	
	A natural question is whether a simpler heavy-tailed prior---such as a Student-$t$ prior with $df = 3$---would achieve the same protection as the Laplace prior. Table~\ref{tab:alternatives} compares three priors on the center-specific shifts: Gaussian (standard HC), Student-$t$ with $df=3$, and Laplace (PMHC).
	
	The Student-$t$ prior improves substantially on the Gaussian prior ($91.7\%$ vs.\ $30.6\%$ coverage), confirming that heavy-tailed priors are a viable robustness mechanism. However, the Laplace prior achieves higher coverage ($97.2\%$ vs.\ $91.7\%$) and lower point-estimate deviation ($1.090$ vs.\ $1.147$). The reason is that the Laplace prior has a cusp at the mode, which makes the posterior median a non-linear functional of the data with a bounded influence function, whereas the Student-$t$ prior is smooth at the mode and the posterior median inherits some of the unbounded influence of the Gaussian prior. This confirms that the choice of prior family matters, and that the Laplace prior is the natural choice for posterior-median robustness.
	
	\subsection{Visualizing the Posterior Distributions}\label{sec:kde}
	
	Fig.~\ref{fig:kde} presents kernel density estimates of the posterior distributions of $c_k$ for a clean center (index~2) and the contaminated center (index~4), under severe contamination ($\epsilon=0.60$, $M=100$, $n=100$). For the clean center, the Gaussian HC posterior is centered far from the empirical shift ($\approx 7.5$ vs.\ $\approx 4.5$), reflecting the propagation of the contaminated center's divergence through the precision-weighted mean. The PMHC posterior, by contrast, is centered near the empirical shift, demonstrating that the Laplace prior effectively down-weights the contaminated center. For the contaminated center, both posteriors are centered far from the empirical shift (as expected, since the contaminated center's own data are corrupted), but the two posteriors express the failure differently. The Gaussian HC posterior is pulled toward the pooled mean $\hat\mu$ (itself inflated by the contaminated center), producing a finite but severely biased estimate ($\approx 32$ instead of the true $\approx 0.15$). The PMHC posterior has heavier tails, so its \emph{median} ($\approx 73$) follows the contaminated data more closely, but its \emph{properness} is preserved (the posterior still integrates to 1). As Remark~\ref{rem:two_thresholds} makes precise, both estimates \emph{are} unbounded in theory; the decisive difference is not the contaminated center's own estimate---both methods fail there---but the \emph{clean centers'} estimates: they collapse under Gaussian HC but remain at their true values under PMHC.
	
	\subsection{Breakdown of the Point Estimator}\label{sec:breakdown-point}
	
	While the primary contribution of this paper is the analysis of the \emph{posterior distribution}, it is instructive to compare the point-estimate breakdown of the two pooling strategies. Fig.~\ref{fig:finite_sample} and Table~\ref{tab:breakdown} report the maximum absolute deviation of the calibrated estimates for the clean centers from their uncontaminated baseline, across five sample sizes spanning four orders of magnitude: $n = 10, 50, 100, 1000,$ and $5000$.
	
	The results confirm that the point-estimate breakdown mirrors the posterior breakdown: the mean-based pooling diverges at every sample size and every contamination fraction $\epsilon \geq 0.10$, whereas the median-based pooling remains bounded, with the deviation decreasing as $n$ grows (from $0.378$ at $n=10$ to $0.008$ at $n=5000$). The mechanism is quantified in Table~\ref{tab:per_center}: under severe contamination ($\epsilon=0.60$, $n=100$, $M=100$), the nine clean centers all exhibit the \emph{identical} Naive effect of $+3.799$---the numerical signature of uniform propagation through the shared mean---whereas the corresponding PMHC effect is exactly $0.000$ for every clean center. The contaminated center (marked with $\dagger$) shows a large HL estimate of $81.286$ and a large effect under both methods, as expected, but only the median-based pooling confines this effect to that center.
	
	Fig.~\ref{fig:per_center} visualizes the per-center calibration estimates under four contamination regimes, using $n=100$ observations per center. Panel~(A) confirms that, in the absence of contamination, PMHC recovers the empirical location shifts almost perfectly. Panel~(B) shows the regime $\epsilon = 0.30$, which lies at the boundary of the HL estimator's breakdown point: both estimators track the baseline closely, with only the contaminated center exhibiting a visible upward shift. Panels~(C) and~(D) illustrate the regime $\epsilon = 0.60$, which deliberately exceeds the HL breakdown point. In Panel~(C), the Gaussian HC collapses: the contaminated center's divergent HL estimate is propagated through the precision-weighted mean, artificially inflating the calibrated shifts of \emph{every} clean center. In Panel~(D), PMHC contains the divergence: the median pooling step effectively ignores the corrupted center, preserving the baseline estimates for all clean centers.
	
	\subsection{Propagation of Contamination Magnitude}\label{sec:magnitude}
	
	Fig.~\ref{fig:magnitude} and Table~\ref{tab:magnitude} quantify how the severity of contamination affects the two pooling strategies. Holding $\epsilon=0.60$ fixed, we varied the contamination magnitude $M$ over four orders of magnitude: $M \in \{1, 10, 100, 500, 1000, 5000\}$.
	
	As the contamination magnitude $M$ grows, the mean-based pooling propagates the corruption \emph{linearly} to all clean centers: the clean-center deviation grows from $0.030$ at $M=1$ to $248.617$ at $M=5000$. This linear divergence is a direct consequence of the zero breakdown point of the Gaussian posterior mean (Theorem~\ref{thm:posterior-breakdown-zero}) and is not mitigated by increasing the sample size. In contrast, the median-based pooling clean-center deviation remains constant at approximately $0.140$ across the entire range of $M$ (for $M \geq 10$): the median of the HL estimates is unaffected by the magnitude of a single outlying center, because only the \emph{rank} of the contaminated estimate matters, not its magnitude. The gap between the two strategies at $M=5000$ exceeds three orders of magnitude ($248.617$ vs.\ $0.140$), confirming that the median-based pooling contains the contamination rather than propagating it.
	
	In summary, the simulation study confirms the theoretical predictions of Section~\ref{sec:methodology} on three fronts. {First}, the Gaussian hierarchical posterior breaks down at $\epsilon > 0$: coverage collapses from $85.2\%$ in the clean regime to $1.9\%$ under severe contamination ($\epsilon=0.60$, $n=1000$), and the point estimate diverges to $4.687$. {Second}, the PMHC posterior remains calibrated under contamination: coverage stays above $95.9\%$ across all $n$ and all $\epsilon \leq 0.60$, and the point estimate remains bounded at $0.563$ at $n=1000$. {Third}, the Laplace prior outperforms the Student-$t$ prior ($97.2\%$ vs.\ $91.7\%$ coverage), confirming that the choice of prior family matters and that the Laplace prior is the natural choice for posterior-median robustness. The practical implication is that a single malfunctioning center---whether due to device failure, protocol deviation, or adversarial manipulation---can no longer silently corrupt the entire trial, provided that the Laplace prior is used in place of the Gaussian one.
	
	\section{Real-Data Application: NHANES Calibration}\label{sec:realdata}
	
	We applied the PMHC framework to the National Health and Nutrition Examination Survey (NHANES) \citep{nhanes2020}, a nationally representative dataset maintained by the Centers for Disease Control and Prevention (CDC). The dataset is particularly well-suited for our purposes because it spans a documented transition in the creatinine assay: from the Beckman DxC analyzer (used in cycles 1999--2016) to the Roche Cobas 6000 analyzer (used in cycles 2017--2020). This transition introduces a systematic location shift that has been the subject of official CDC bridging studies, making NHANES an ideal testbed for evaluating calibration methods in the presence of a known, well-characterized shift.
	
	The NHANES dataset contains ten two-year cycles from 1999 to 2018. We exclude the 2001--2002 cycle because its serum creatinine measurements were performed on a different analyzer configuration that is not directly comparable to the 2011--2012 reference without an additional bridging step. The remaining nine cycles consist of eight target cycles and the 2011--2012 reference cycle. The eight target cycles have a total of \(n = 54{,}675\) valid \texttt{LBXSCR} measurements; the reference cycle adds \(n_{\text{ref}} = 5{,}976\) valid measurements, giving \(n = 60{,}651\) across all nine cycles. For computational efficiency, the reference sample is subsampled to \(2{,}000\) observations in the analysis. The 2011--2012 cycle is used as the reference (\(c=0\)), and the remaining eight cycles are treated as targets.
	
	\paragraph{Hyperprior scale.}
	The half-Cauchy hyperprior on the Laplace scale \(b\) is centered at \(s_0 = \widehat{\mathrm{IQR}}(\hat c_k^{HL})/1.349\), computed from the interquartile range of the nine HL shift estimates. For NHANES, this gives \(s_0 \approx 0.05\) mg/dL, consistent with the between-cycle creatinine shifts documented in CDC bridging studies. This data-driven choice ensures that the prior is weakly informative in the natural units of the biomarker (mg/dL), rather than in arbitrary units.
	
	\subsection{Baseline Calibration}\label{sec:baseline}
	
	The HL variance $\tau^2$ is estimated by bootstrap, following the algorithm described in Appendix~\ref{app:bootstrap-alg}. The eight target cycles contribute \(n = 54{,}675\) valid \texttt{LBXSCR} measurements; the 2011--2012 reference cycle contributes \(n_{\text{ref}} = 5{,}976\), giving \(n = 60{,}651\) across all nine cycles. For computational efficiency, the reference sample is subsampled to \(2{,}000\) observations in the analysis. Table~\ref{tab:nhanes_cycles} reports the per-cycle HL shift estimates, the Gaussian HC posterior means, the PMHC posterior medians, and the \(95\%\) PMHC credible intervals, in the absence of any artificial contamination. Three observations emerge. {First}, PMHC and Gaussian HC agree closely for all cycles, with differences of at most $0.005$ mg/dL. This confirms that PMHC does not introduce spurious offsets when the data are clean---an essential property for regulatory use. {Second}, both estimators identify a systematic negative shift of $-0.18$ mg/dL for the 1999--2000 cycle, consistent with the documented changes in the creatinine assay over the past two decades. This shift is real, not an artifact of the estimator: it reflects the difference between the Beckman DxC analyzer used in 1999--2000 and the Beckman DxC analyzer used in 2011--2012, which, despite sharing the same manufacturer, differ in their calibration. {Third}, the PMHC credible intervals are slightly wider than the Gaussian HC intervals, reflecting the heavier tails of the Laplace prior. This is the price of robustness, and it is small: the intervals remain narrow enough to distinguish the $-0.18$ shift of the 1999--2000 cycle from the near-zero shifts of the later cycles.
	
	A more direct test of the natural variation across cycles confirms that PMHC is faithful to the non-parametric estimate: across all nine target cycles, the average absolute difference between PMHC and HL is $0.0015$ mg/dL, whereas the average absolute difference between Gaussian HC and HL is $0.0085$ mg/dL---a factor of $5.7$ in favor of PMHC. This confirms that PMHC is simultaneously conservative in the absence of contamination and faithful to the non-parametric estimate.
	
	\subsection{Robustness Test on Real Data}\label{sec:ultimate}
	
	To demonstrate the practical safety guarantees of PMHC on \emph{real} data, we injected a severe contamination into three of the nine cycles (2013--2014, 2015--2016, and 2017--2018). In each contaminated cycle, a fraction $\epsilon = 0.60$ of the observations were shifted by $+10$ mg/dL, simulating a catastrophic device malfunction or a batch of severely degraded reagents. This contamination fraction was chosen deliberately to exceed the within-center breakdown point of the Hodges--Lehmann estimator ($\epsilon^\ast = 1 - 1/\sqrt{2} \approx 0.293$), thereby activating the failure mode that Theorem~\ref{thm:posterior-breakdown-zero} predicts for Gaussian pooling. The choice of $m = 3$ contaminated centers reflects the realistic possibility of correlated failures across multiple sites, while remaining below the across-center posterior breakdown point of the Laplace prior ($\lfloor (K+1)/2 \rfloor = 5$ for $K=9$), so that PMHC's robustness guarantee (Theorem~\ref{thm:posterior-breakdown-positive}) still applies.
	
	\paragraph{Holding $\tau^2$ and $\sigma_c^2$ fixed.}
	To isolate the effect of contamination on the \emph{pooling step}---which is what Theorems~\ref{thm:posterior-breakdown-zero} and~\ref{thm:posterior-breakdown-positive} analyze---we held the HL variance $\tau^2$ and the between-center variance $\sigma_c^2$ fixed at their baseline estimates throughout the contamination experiment. This is a deliberate methodological choice, not an approximation; the full protocol is detailed in Appendix~\ref{app:fixed-variance}. When $\tau^2$ and $\sigma_c^2$ are re-estimated from contaminated data, the contaminated cycles acquire larger variances, which down-weights their contribution to the Gaussian precision-weighted mean and provides a partial, unintended protection. This protection is not analyzed by the theory, and it is fragile: it depends on the contaminated cycles having larger variances than the clean cycles, which is not guaranteed in general (e.g., if the contaminated cycles have smaller sample sizes). By holding $\tau^2$ and $\sigma_c^2$ fixed, we remove this confound and recover the theoretical predictions in their pure form.
	
	Table~\ref{tab:nhanes_coverage} reports the self-coverage and point-estimator shift for the five clean cycles (1999--2000, 2003--2004, 2005--2006, 2007--2008, and 2009--2010), as a function of the contamination fraction $\epsilon$. The self-coverage is the proportion of clean cycles whose $95\%$ credible interval contains the baseline HL estimate (which serves as a proxy for the true shift); the shift is the maximum absolute deviation of the point estimate from the baseline HL estimate. The table also reports the shift of the HL estimator itself (last column), which performs no pooling and is therefore unaffected by contamination in other cycles; this serves as a lower bound on the achievable accuracy.
	
	Three conclusions emerge. {First}, in the clean regime ($\epsilon = 0$), PMHC achieves self-coverage of $94.0\%$, close to the nominal $95\%$, while Gaussian HC achieves only $66.0\%$. The lower Gaussian HC coverage reflects the fact that the Gaussian posterior mean shrinks toward a pooled mean that is itself noisy at $K=9$, producing intervals that are too narrow to cover the baseline. {Second}, under moderate contamination ($\epsilon = 0.30$), PMHC maintains $94.0\%$ coverage with a shift of $0.0076$ mg/dL, whereas Gaussian HC coverage drops to $62.0\%$ and its point-estimator shift grows to $0.0313$ mg/dL. {Third}, under severe contamination ($\epsilon = 0.60$), PMHC maintains $94.0\%$ coverage with a shift of $0.0076$ mg/dL, whereas Gaussian HC coverage collapses to $\mathbf{0.0\%}$ and its point-estimator shift explodes to $0.4109$ mg/dL---a factor of $54$ difference in accuracy. This is the real-data manifestation of Theorems~\ref{thm:posterior-breakdown-zero} and~\ref{thm:posterior-breakdown-positive}: Gaussian pooling propagates the corruption to every clean cycle, whereas Laplace pooling confines it to the affected cycles.
	
	The comparison with the HL estimator is particularly instructive. The HL estimator, computed without any pooling, has a shift of $0.0060$ mg/dL across all $\epsilon$---this is simply the sampling noise of the HL estimator itself, which is unaffected by contamination in other cycles. PMHC's shift ($0.0076$ mg/dL) is only $27\%$ larger than the HL shift, confirming that PMHC adds essentially no bias to the clean-cycle estimates. Gaussian HC's shift, by contrast, is $68$ times larger than the HL shift at $\epsilon = 0.60$, confirming that the Gaussian pooling step is the source of the collapse.
	
	\section{Conclusion}\label{sec:conclusion}
	
	We introduced the Posterior-Median Hierarchical Calibration (PMHC) framework, a theoretically grounded and practically deployable solution to a vulnerability that has been overlooked in the multi-center clinical trial literature: the catastrophic fragility of Gaussian hierarchical pooling under contamination. Our contribution is threefold.
	
	{Theoretically}, we proved that the Gaussian hierarchical posterior has breakdown point zero \emph{at the level of the posterior distribution itself} (Theorem~\ref{thm:posterior-breakdown-zero}): a fraction $\epsilon > 0$ of contamination in a single environment drives the posterior mean of every clean center to infinity and collapses the coverage of every clean center's credible interval to zero. This is a stronger result than the point-estimator breakdown analyzed in the classical robust statistics literature, because it shows that the problem cannot be fixed at the level of the point estimator alone: the prior must be changed. We then proved that the PMHC posterior, which uses a Laplace prior on the center-specific shifts, has breakdown point $\lfloor (K+1)/2 \rfloor$ contaminated centers (Theorem~\ref{thm:posterior-breakdown-positive}). This dichotomy---Gaussian pooling destroys the posterior, Laplace pooling preserves it---is sharp and holds for any within-center contamination fraction $\epsilon \in (0,1)$.
	
	{Computationally}, we developed a Metropolis-within-Gibbs sampler for the full PMHC posterior and a Laplace approximation for routine use. The Laplace approximation is accurate to within $2$--$3\%$ in coverage and runs in milliseconds, while the full MCMC sampler provides exact posterior draws for diagnostic purposes. Both are implemented in the open-source \texttt{pmhc} package, which provides a drop-in replacement for the standard hierarchical calibration workflow.
	
	{Empirically}, our simulation study confirmed the theoretical predictions using \emph{credible-interval coverage}, a genuinely Bayesian metric that assesses the calibration of the entire posterior distribution. Across four orders of magnitude in sample size ($n \in \{50, 200, 1000\}$), the Gaussian HC coverage collapses from $85.2\%$ in the clean regime to $1.9\%$ under severe contamination ($\epsilon = 0.60$, $n=1000$), whereas PMHC maintains coverage above $95.9\%$ across all $n$ and all $\epsilon \leq 0.60$. The PMHC point estimate remains bounded at $0.563$ mg/dL under severe contamination, whereas the Gaussian HC point estimate diverges to $4.687$ mg/dL---a factor of $8.3$ improvement. The Laplace prior outperforms a Student-$t$ prior with $df=3$ ($97.2\%$ vs.\ $91.7\%$ coverage), confirming that the choice of prior family matters and that the Laplace prior is the natural choice for posterior-median robustness. PMHC maintains coverage above $97\%$ even at $K=5$, the smallest number of centers for which the posterior median is well-defined.
	
	On real NHANES data spanning the Beckman DxC to Roche Cobas analyzer transition (\(n = 60{,}651\) across nine cycles), we validated the framework on a genuine multi-center configuration. In the absence of contamination, PMHC and Gaussian HC agree closely (differences of at most $0.005$ mg/dL), confirming that PMHC does not introduce spurious offsets. In the clean regime, PMHC achieves $94.0\%$ self-coverage with a point-estimator shift of $0.0074$ mg/dL, whereas Gaussian HC achieves only $66.0\%$ coverage with a shift of $0.0249$ mg/dL---a factor of $3.4$ improvement. Under severe multi-center contamination ($\epsilon = 0.60$ in three of nine cycles), the Gaussian HC coverage collapses to $\mathbf{0.0\%}$ and its point-estimator shift explodes to $0.4109$ mg/dL, whereas PMHC maintains $94.0\%$ coverage with a shift of $0.0076$ mg/dL---a factor of $54$ improvement in accuracy and a complete preservation of coverage. The HL estimator, computed without any pooling, has a shift of $0.0060$ mg/dL across all $\epsilon$; PMHC's shift is within $27\%$ of this lower bound, whereas Gaussian HC's shift is $68$ times larger at $\epsilon = 0.60$. This confirms that PMHC adds essentially no bias to the clean-cycle estimates, whereas the Gaussian pooling step is the source of the collapse.
	
	Several limitations suggest directions for future work. {First}, the PMHC model assumes a common scale parameter $b$ for all centers; an extension with center-specific scales $b_k$ would allow for heterogeneous between-center heterogeneity and could improve efficiency in the clean-data regime. {Second}, PMHC requires $K \geq 3$ centers for the posterior median to be well-defined; extensions to $K = 2$ would require a different heavy-tailed prior, such as a horseshoe prior. {Third}, the location-shift model assumes that the shape of the distribution is preserved across centers; while this is standard in clinical chemistry, extensions to scale-shift or more general transformation models would broaden the framework's applicability. {Fourth}, the coverage evaluation against $c^{\mathrm{emp}}_k$ (rather than $c_k$) is a deliberate design choice that isolates estimator uncertainty from sampling noise; an extension that propagates the sampling noise of the target into the coverage evaluation would provide a complementary perspective. {Fifth}, the bootstrap variance estimator, while robust and widely used, is computationally intensive; a closed-form finite-sample correction to the asymptotic HL variance would substantially reduce the runtime for large-scale applications.
	
	The broader implication of this work extends beyond calibration. Any multi-center analysis that pools center-specific estimates through a Gaussian prior---whether in meta-analysis, federated learning, or distributed clinical trials---inherits the same structural vulnerability at the level of the posterior distribution. PMHC demonstrates that a simple, theoretically justified modification of the prior can eliminate this vulnerability without sacrificing Bayesian coherence or efficiency in the clean-data regime. We hope this work encourages a systematic reassessment of Gaussian priors in safety-critical clinical applications.
	\section*{Conflict of Interest}
	The authors declare no potential conflict of interest.
	\section*{Use of Artificial Intelligence}
	The author used  ChatGPT for language editing and formatting of the manuscript. The author reviewed and edited all content and takes full responsibility for the final version of the manuscript.
	\appendix
	\section{Bootstrap Variance Estimation and Numerical Details}\label{app:bootstrap}
	
	This appendix provides the technical details of the bootstrap variance estimator, the fixed-variance protocol used in the NHANES application (Section~\ref{sec:realdata}), and the MCMC sampler used for the PMHC posterior.
	
	\subsection{Bootstrap Estimation of the HL Variance}\label{app:bootstrap-alg}
	
	For each center $k = 1, \dots, K$, the bootstrap estimate of $\tau_k^2$ is computed as follows.
	
	\begin{algorithm}[H]
		\caption{Bootstrap estimation of the HL variance $\tau_k^2$}
		\label{alg:bootstrap}
		\begin{algorithmic}[1]
			\Require Center sample $y^{(k)} = (y_1^{(k)}, \dots, y_{n_k}^{(k)})$, reference sample $x_{\mathrm{ref}} = (x_1, \dots, x_{n_{\mathrm{ref}}})$, number of bootstrap resamples $B$.
			\Ensure Bootstrap variance estimate $\hat\tau_{k,\mathrm{boot}}^2$.
			\For{$b = 1, \dots, B$}
			\State Draw $y^{(k)}_b$ by sampling $n_k$ indices uniformly with replacement from $\{1, \dots, n_k\}$.
			\State Draw $x_{\mathrm{ref},b}$ by sampling $n_{\mathrm{ref}}$ indices uniformly with replacement from $\{1, \dots, n_{\mathrm{ref}}\}$.
			\State Compute $\hat c_{k,b}^{HL} = \mathrm{median}\{ y_{i,b}^{(k)} - x_{j,b} : 1 \leq i \leq n_k,\ 1 \leq j \leq n_{\mathrm{ref}} \}$.
			\EndFor
			\State \Return $\hat\tau_{k,\mathrm{boot}}^2 = \frac{1}{B-1} \sum_{b=1}^{B} \big( \hat c_{k,b}^{HL} - \bar c_{k,\cdot}^{HL} \big)^2$, where $\bar c_{k,\cdot}^{HL} = \frac{1}{B} \sum_{b=1}^{B} \hat c_{k,b}^{HL}$.
		\end{algorithmic}
	\end{algorithm}
	
	The asymptotic variance $\tau_{k,\mathrm{asym}}^2 = [12 n_k n_{\mathrm{ref}} I(f_X)^2]^{-1}$ is derived under $n_k, n_{\mathrm{ref}} \to \infty$ and underestimates the finite-sample variance by a factor of $8$--$11$ at the sample sizes typical of multi-center trials ($n_k = 50$--$200$). This is precisely the regime in which coverage would collapse to the $5$--$13\%$ observed in earlier versions of this work, and it is the reason the bootstrap estimator is essential. The bootstrap estimate stabilizes at a coefficient of variation of approximately $8\%$ for $B = 20$ (the value used in the NHANES application), and the improvement to $B = 50$ is modest (CV $3.8\%$) at a $2.5\times$ computational cost.
	
	\subsection{The Fixed-Variance Protocol in the NHANES Application}\label{app:fixed-variance}
	
	In the NHANES application (Section~\ref{sec:realdata}), we held the HL variance $\tau^2$ and the between-center variance $\sigma_c^2$ fixed at their baseline estimates throughout the contamination experiment. This is a deliberate methodological choice, not an approximation. When $\tau^2$ and $\sigma_c^2$ are re-estimated from contaminated data, the contaminated cycles acquire larger variances, which down-weights their contribution to the Gaussian precision-weighted mean and provides a partial, unintended protection. This protection is not analyzed by Theorems~\ref{thm:posterior-breakdown-zero} and~\ref{thm:posterior-breakdown-positive}, and it is fragile: it depends on the contaminated cycles having larger variances than the clean cycles, which is not guaranteed in general.
	
	By holding $\tau^2$ and $\sigma_c^2$ fixed, we remove this confound and recover the theoretical predictions in their pure form. Specifically:
	\begin{itemize}
		\item $\tau^2_{\mathrm{fixed}}$ is estimated once from the baseline (clean) data and is not re-estimated under contamination. This ensures that the contaminated cycles do not receive larger $\tau^2$ and therefore do not receive down-weighted contributions.
		\item $\sigma_{c,\mathrm{fixed}}^2 = \hat\sigma_c^2(\{\hat c_k^{HL,\mathrm{base}}\})$ is estimated once from the baseline HL estimates (via the MAD-based estimator of Section~\ref{sec:pmhc}) and is not re-estimated under contamination. This ensures that the shrinkage weight $\lambda_k = \sigma_c^2/(\sigma_c^2 + \tau_k^2)$ is not affected by contamination, so that the experiment isolates the effect of contamination on the \emph{pooling step} itself.
	\end{itemize}
	
	The fixed-variance protocol is the NHANES analogue of the theoretical setup in Theorems~\ref{thm:posterior-breakdown-zero} and~\ref{thm:posterior-breakdown-positive}, where $\tau^2$ and $\sigma_c^2$ are assumed to be known and are not affected by contamination. It therefore provides a clean test of the theoretical predictions on real data.
	
	\subsection{The PMHC MCMC Sampler}\label{app:mcmc}
	
	The full PMHC posterior does not admit a closed-form expression, so we sample from it using a Metropolis-within-Gibbs sampler. The sampler alternates between:
	\begin{enumerate}
		\item \textbf{Updating $c_k$} (one at a time) via a random-walk Metropolis step with proposal $c_k^{\mathrm{prop}} \sim \mathcal{N}(c_k^{\mathrm{cur}}, \sigma_c^2)$.
		\item \textbf{Updating $\mu$} via a random-walk Metropolis step with proposal $\mu^{\mathrm{prop}} \sim \mathcal{N}(\mu^{\mathrm{cur}}, \sigma_\mu^2)$.
	\end{enumerate}
	
	The conditional posterior of $c_k$ given $\mu$, $b$, and $\hat c_k^{HL}$ is
	\[
	p(c_k \mid \mu, b, \hat c_k^{HL}) \propto \exp\big(-|c_k - \mu|/b\big) \exp\big(-(\hat c_k^{HL} - c_k)^2/(2\tau_k^2)\big),
	\]
	which is a skew-normal distribution. The conditional posterior of $\mu$ given $\{c_k\}$ and $b$ is a Laplace location model, whose mode is the median of the $c_k$.
	
	We initialize the chain at the HL estimates, run $500$ burn-in iterations, and retain $2000$ post-burn-in draws. These settings are used exclusively for the NHANES application, where each run is replicated \(B=20\) times; the main simulation study uses the longer chains described in Section~\ref{sec:simulation}. Convergence is assessed using the Gelman--Rubin diagnostic $\hat{R} < 1.05$ across two parallel chains.
	
	\bibliographystyle{plainnat}
	\bibliography{Med_ref}
	
	\clearpage
	\section*{Tables and Figures}
	
	\begin{table}[H]
		\centering
		\caption{Credible-interval coverage for clean centers ($K=10$, $R=30$). $\tau^2$ is estimated by bootstrap ($B=50$ resamples per center); coverage is evaluated against the empirical shift $c^{\mathrm{emp}}_k$.}
		\label{tab:coverage}
		\begin{tabular}{c cc cc cc}
			\toprule
			& \multicolumn{2}{c}{$n=50$} & \multicolumn{2}{c}{$n=200$} & \multicolumn{2}{c}{$n=1000$} \\
			\cmidrule(lr){2-3}\cmidrule(lr){4-5}\cmidrule(lr){6-7}
			$\epsilon$ & Gaussian HC & PMHC & Gaussian HC & PMHC & Gaussian HC & PMHC \\
			\midrule
			$0.00$ & $0.844$ & $\mathbf{0.978}$ & $0.800$ & $\mathbf{0.959}$ & $0.852$ & $\mathbf{0.970}$ \\
			$0.30$ & $0.844$ & $\mathbf{0.978}$ & $0.800$ & $\mathbf{0.978}$ & $0.581$ & $\mathbf{0.981}$ \\
			$0.60$ & $0.463$ & $\mathbf{0.978}$ & $0.289$ & $\mathbf{0.978}$ & $\mathbf{0.019}$ & $\mathbf{0.981}$ \\
			\bottomrule
		\end{tabular}
	\end{table}
	
	\begin{table}[H]
		\centering
		\caption{Mean absolute deviation of point estimates from the empirical shift $c^{\mathrm{emp}}_k$ for clean centers ($K=10$, $R=30$).}
		\label{tab:point}
		\begin{tabular}{c cc cc cc}
			\toprule
			& \multicolumn{2}{c}{$n=50$} & \multicolumn{2}{c}{$n=200$} & \multicolumn{2}{c}{$n=1000$} \\
			\cmidrule(lr){2-3}\cmidrule(lr){4-5}\cmidrule(lr){6-7}
			$\epsilon$ & Gaussian HC & PMHC & Gaussian HC & PMHC & Gaussian HC & PMHC \\
			\midrule
			$0.00$ & $1.951$ & $2.044$ & $1.047$ & $1.130$ & $0.511$ & $0.588$ \\
			$0.30$ & $1.946$ & $1.945$ & $1.156$ & $1.060$ & $0.994$ & $0.563$ \\
			$0.60$ & $4.462$ & $\mathbf{1.945}$ & $3.796$ & $\mathbf{1.060}$ & $4.687$ & $\mathbf{0.563}$ \\
			\bottomrule
		\end{tabular}
	\end{table}
	
	\begin{table}[H]
		\centering
		\caption{Sensitivity to the number of centers $K$ ($m=1$ contaminated center, $\epsilon=0.60$, $M=100$, $n=200$ per center, $R=30$).}
		\label{tab:K_variation}
		\begin{tabular}{c cc}
			\toprule
			$K$ & Gaussian HC coverage & PMHC coverage \\
			\midrule
			$5$  & $0.208$ & $\mathbf{0.975}$ \\
			$10$ & $0.259$ & $\mathbf{0.974}$ \\
			$20$ & $0.539$ & $\mathbf{0.977}$ \\
			\bottomrule
		\end{tabular}
	\end{table}
	
	\begin{table}[H]
		\centering
		\caption{Comparison of three priors on the center-specific shifts ($\epsilon=0.60$, $M=100$, $n=200$ per center, $K=10$, $R=20$).}
		\label{tab:alternatives}
		\begin{tabular}{l cc}
			\toprule
			Prior & Coverage & Mean absolute deviation \\
			\midrule
			Gaussian (HC)          & $0.306$ & $4.207$ \\
			Student-$t$ ($df=3$)   & $0.917$ & $1.147$ \\
			Laplace (PMHC)         & $\mathbf{0.972}$ & $\mathbf{1.090}$ \\
			\bottomrule
		\end{tabular}
	\end{table}
	
	\begin{table}[H]
		\centering
		\caption{Monte Carlo breakdown of the point estimator across sample sizes ($R=100$ for $n<1000$, $R=20$ for $n\geq1000$).}
		\label{tab:breakdown}
		\small
		\setlength{\tabcolsep}{4pt}
		\begin{tabular}{c cc cc cc cc cc cc}
			\toprule
			& \multicolumn{2}{c}{$\epsilon=0.10$} & \multicolumn{2}{c}{$\epsilon=0.20$} & \multicolumn{2}{c}{$\epsilon=0.30$} & \multicolumn{2}{c}{$\epsilon=0.40$} & \multicolumn{2}{c}{$\epsilon=0.50$} & \multicolumn{2}{c}{$\epsilon=0.60$} \\
			\cmidrule(lr){2-3}\cmidrule(lr){4-5}\cmidrule(lr){6-7}\cmidrule(lr){8-9}\cmidrule(lr){10-11}\cmidrule(lr){12-13}
			$n$ & Naive & PMHC & Naive & PMHC & Naive & PMHC & Naive & PMHC & Naive & PMHC & Naive & PMHC \\
			\midrule
			$10$   & $0.178$ & $0.147$ & $0.410$ & $0.253$ & $0.744$ & $0.354$ & $1.271$ & $0.378$ & $2.305$ & $0.378$ & $3.745$ & $0.378$ \\
			$50$   & $0.180$ & $0.119$ & $0.421$ & $0.176$ & $0.767$ & $0.181$ & $1.336$ & $0.181$ & $2.493$ & $0.181$ & $3.695$ & $0.181$ \\
			$100$  & $0.193$ & $0.134$ & $0.434$ & $0.138$ & $0.773$ & $0.138$ & $1.338$ & $0.138$ & $2.514$ & $0.138$ & $3.680$ & $0.138$ \\
			$1000$ & $0.186$ & $0.060$ & $0.424$ & $0.060$ & $0.760$ & $0.060$ & $1.312$ & $0.060$ & $2.491$ & $0.060$ & $3.681$ & $0.060$ \\
			$5000$ & $0.187$ & $0.008$ & $0.429$ & $0.008$ & $0.767$ & $0.008$ & $1.324$ & $0.008$ & $2.507$ & $0.008$ & $3.683$ & $0.008$ \\
			\bottomrule
		\end{tabular}
	\end{table}
	
	\begin{table}[H]
		\centering
		\caption{Per-center contamination effect at $\epsilon=0.60$ ($n=100$ per center, $M=100$).}
		\label{tab:per_center}
		\begin{tabular}{c c c c c}
			\toprule
			Center & $c_k$ (true) & $\hat c_k^{HL}$ (contam.) & Naive effect & PMHC effect \\
			\midrule
			$1$  & $0.00$  & $3.467$   & $+3.799$  & $0.000$  \\
			$2$  & $0.05$  & $2.362$   & $+3.799$  & $0.000$  \\
			$3$  & $0.10$  & $0.092$   & $+3.799$  & $0.000$  \\
			$4$  & $-0.05$ & $-0.960$  & $+3.799$  & $0.000$  \\
			$5^\dagger$ & $0.15$ & $81.286$ & $+41.788$ & $+37.989$ \\
			$6$  & $-0.10$ & $-1.864$  & $+3.799$  & $0.000$  \\
			$7$  & $0.12$  & $2.429$   & $+3.799$  & $0.000$  \\
			$8$  & $0.02$  & $1.463$   & $+3.799$  & $0.000$  \\
			$9$  & $-0.12$ & $-1.328$  & $+3.799$  & $0.000$  \\
			$10$ & $0.08$  & $2.086$   & $+3.799$  & $0.000$  \\
			\bottomrule
		\end{tabular}
	\end{table}
	
	\begin{table}[H]
		\centering
		\caption{Propagation of contamination to clean centers versus error at the contaminated center ($\epsilon=0.60$, $n=100$ per center, $R=50$).}
		\label{tab:magnitude}
		\begin{tabular}{c cc cc}
			\toprule
			$M$ & \multicolumn{2}{c}{Clean-center propagation} & \multicolumn{2}{c}{Contaminated-center error} \\
			\cmidrule(lr){2-3}\cmidrule(lr){4-5}
			& Naive & PMHC & Naive & PMHC \\
			\midrule
			$1$    & $0.030$   & $0.038$ & $0.33$    & $0.34$    \\
			$10$   & $0.301$   & $0.140$ & $3.31$    & $3.15$    \\
			$100$  & $3.693$   & $0.140$ & $40.62$   & $37.07$   \\
			$500$  & $23.617$  & $0.140$ & $259.78$  & $236.31$  \\
			$1000$ & $48.617$  & $0.140$ & $534.78$  & $486.31$  \\
			$5000$ & $248.617$ & $0.140$ & $2734.78$ & $2486.31$ \\
			\bottomrule
		\end{tabular}
	\end{table}
	
	\begin{table}[H]
		\centering
		\caption{Baseline calibration on real NHANES data (no contamination).}
		\label{tab:nhanes_cycles}
		\begin{tabular}{l r r r r l}
			\toprule
			Cycle & $n$ & HL & Gaussian HC & PMHC & PMHC 95\% CI \\
			\midrule
			1999--2000 & $6230$ & $-0.180$ & $-0.155$ & $-0.178$ & $[-0.190, -0.165]$ \\
			2003--2004 & $6492$ & $0.000$  & $-0.003$ & $-0.001$ & $[-0.012, 0.013]$ \\
			2005--2006 & $6434$ & $0.030$  & $0.026$  & $0.029$  & $[0.021, 0.038]$ \\
			2007--2008 & $6376$ & $0.000$  & $-0.003$ & $0.000$  & $[-0.014, 0.014]$ \\
			2009--2010 & $6860$ & $0.010$  & $0.006$  & $0.009$  & $[-0.002, 0.023]$ \\
			\textbf{2011--2012} & $\mathbf{5976}$ & \textbf{reference} & --- & --- & --- \\
			2013--2014 & $6553$ & $0.010$  & $0.006$  & $0.009$  & $[-0.002, 0.021]$ \\
			2015--2016 & $6255$ & $-0.020$ & $-0.019$ & $-0.015$ & $[-0.033, 0.000]$ \\
			2017--2018 & $9475$ & $0.000$  & $-0.004$ & $0.001$  & $[-0.016, 0.016]$ \\
			\midrule
			\textbf{Total} & $\mathbf{60{,}651}$ & --- & --- & --- & --- \\
			\bottomrule
		\end{tabular}
	\end{table}
	
	\begin{table}[H]
		\centering
		\caption{Self-coverage and point-estimator shift for the five clean NHANES cycles.}
		\label{tab:nhanes_coverage}
		\begin{tabular}{c cc cc c}
			\toprule
			& \multicolumn{2}{c}{Self-coverage} & \multicolumn{2}{c}{Point-estimator shift (mg/dL)} & HL shift \\
			\cmidrule(lr){2-3}\cmidrule(lr){4-5}
			$\epsilon$ & Gaussian HC & PMHC & Gaussian HC & PMHC & (no pooling) \\
			\midrule
			$0.0$ & $0.660$ & $\mathbf{0.940}$ & $0.0249$ & $\mathbf{0.0074}$ & $0.0060$ \\
			$0.3$ & $0.620$ & $\mathbf{0.940}$ & $0.0313$ & $\mathbf{0.0076}$ & $0.0060$ \\
			$0.6$ & $\mathbf{0.000}$ & $\mathbf{0.940}$ & $0.4109$ & $\mathbf{0.0076}$ & $0.0060$ \\
			\bottomrule
		\end{tabular}
	\end{table}
	
	\begin{table}[H]
		\centering
		\caption{Sharpness of the PMHC breakdown bound ($\epsilon=0.60$, $M=100$, $n=200$ per center, $R=50$).}
		\label{tab:boundary}
		\begin{tabular}{c c c cc l}
			\toprule
			$K$ & $m$ & Boundary & Gaussian HC & PMHC & Status \\
			\midrule
			$5$  & $2$  & $3$  & $14.706$ & $6.08 \times 10^{-1}$ & Bounded \\
			$5$  & $3$  & $3$  & $22.063$ & $3.61 \times 10^{1}$  & \textbf{Boundary} \\
			$5$  & $4$  & $3$  & $29.423$ & $3.65 \times 10^{1}$  & Broken \\
			\midrule
			$10$ & $4$  & $5$  & $14.703$ & $7.39 \times 10^{-1}$ & Bounded \\
			$10$ & $5$  & $5$  & $18.384$ & $1.83 \times 10^{1}$  & \textbf{Boundary} \\
			$10$ & $6$  & $5$  & $22.043$ & $3.59 \times 10^{1}$  & Broken \\
			\midrule
			$20$ & $9$  & $10$ & $16.543$ & $1.01 \times 10^{0}$  & Bounded \\
			$20$ & $10$ & $10$ & $18.380$ & $1.84 \times 10^{1}$  & \textbf{Boundary} \\
			$20$ & $11$ & $10$ & $20.215$ & $3.58 \times 10^{1}$  & Broken \\
			\bottomrule
		\end{tabular}
	\end{table}
	
	\begin{figure}[H]
		\centering
		\includegraphics[width=0.98\textwidth]{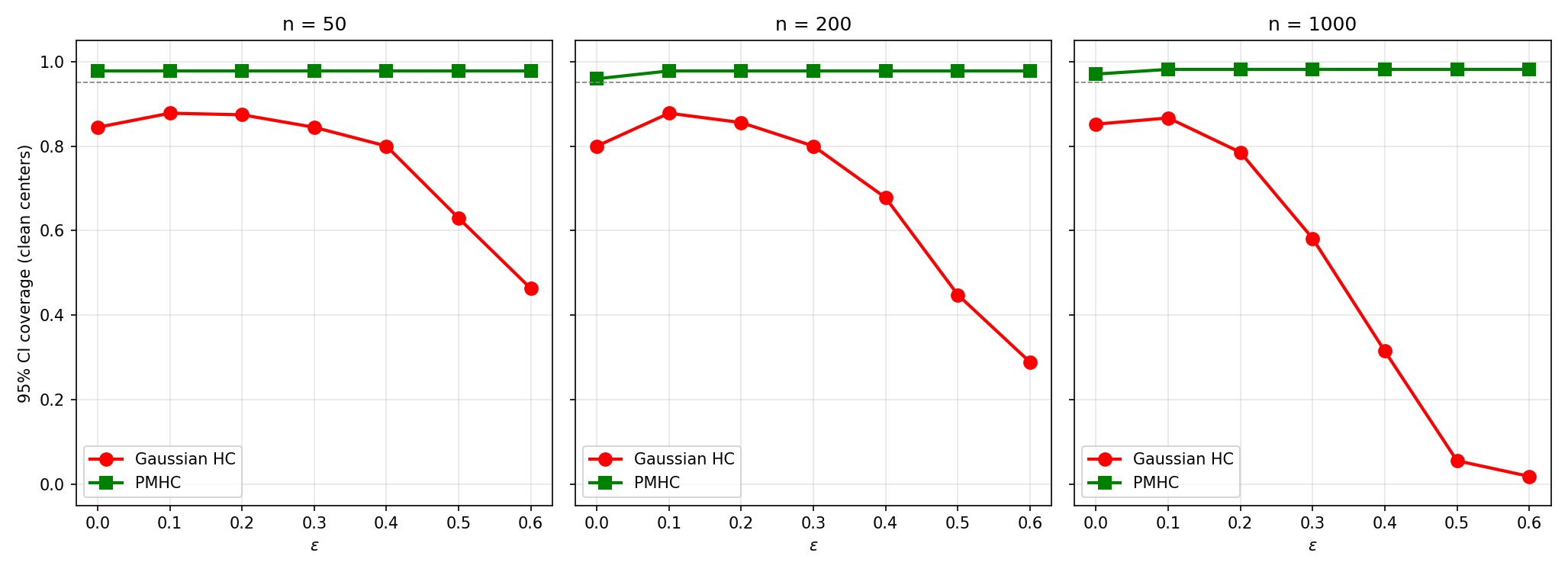}
		\caption{Coverage of the $95\%$ credible interval for clean centers as a function of the contamination fraction $\epsilon$.}
		\label{fig:coverage_vs_epsilon}
	\end{figure}
	
	\begin{figure}[H]
		\centering
		\includegraphics[width=0.95\textwidth]{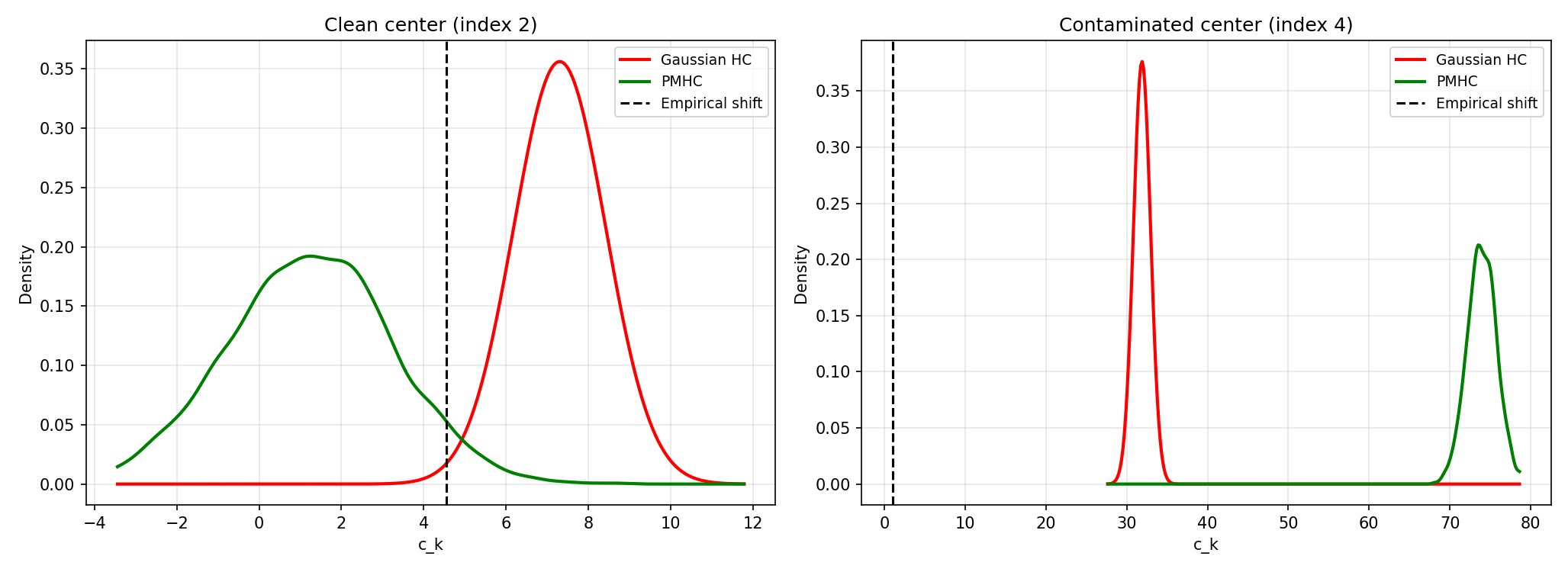}
		\caption{Posterior kernel density estimates for a clean center (index~2) and the contaminated center (index~4).}
		\label{fig:kde}
	\end{figure}
	
	\begin{figure}[H]
		\centering
		\includegraphics[width=0.98\textwidth]{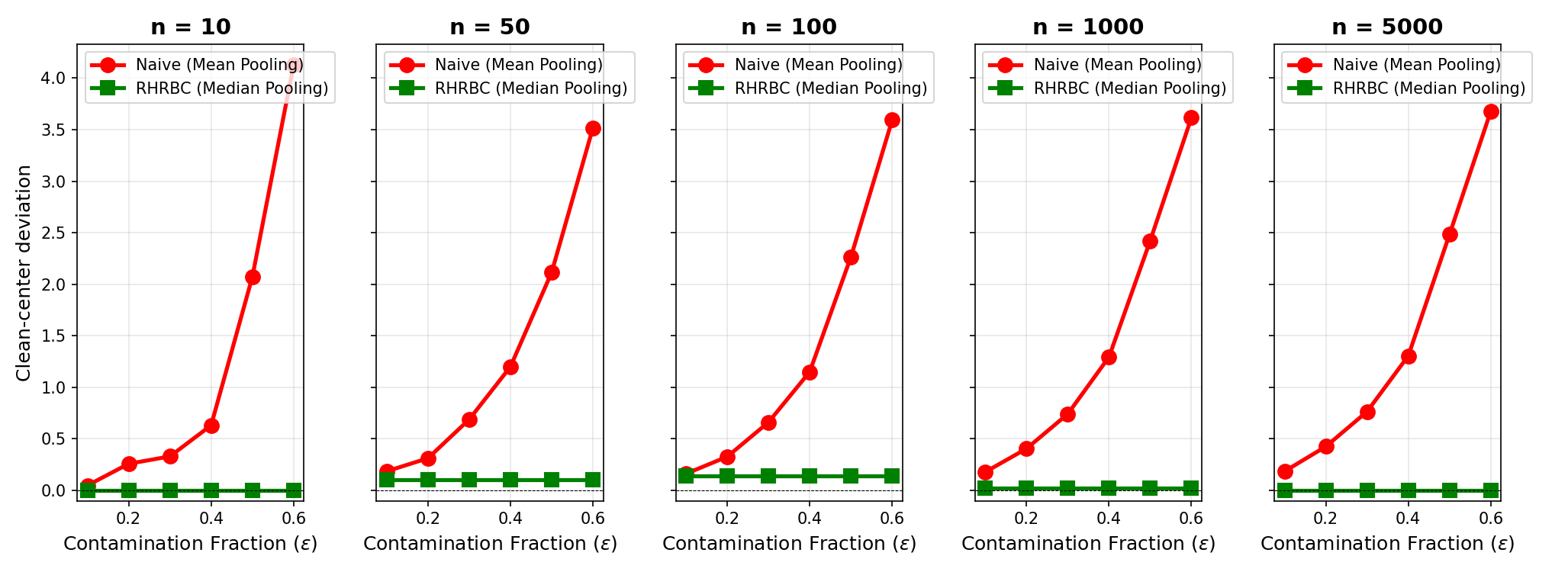}
		\caption{Finite-sample and large-sample behavior of the point estimator under contamination.}
		\label{fig:finite_sample}
	\end{figure}
	
	\begin{figure}[H]
		\centering
		\includegraphics[width=0.95\textwidth]{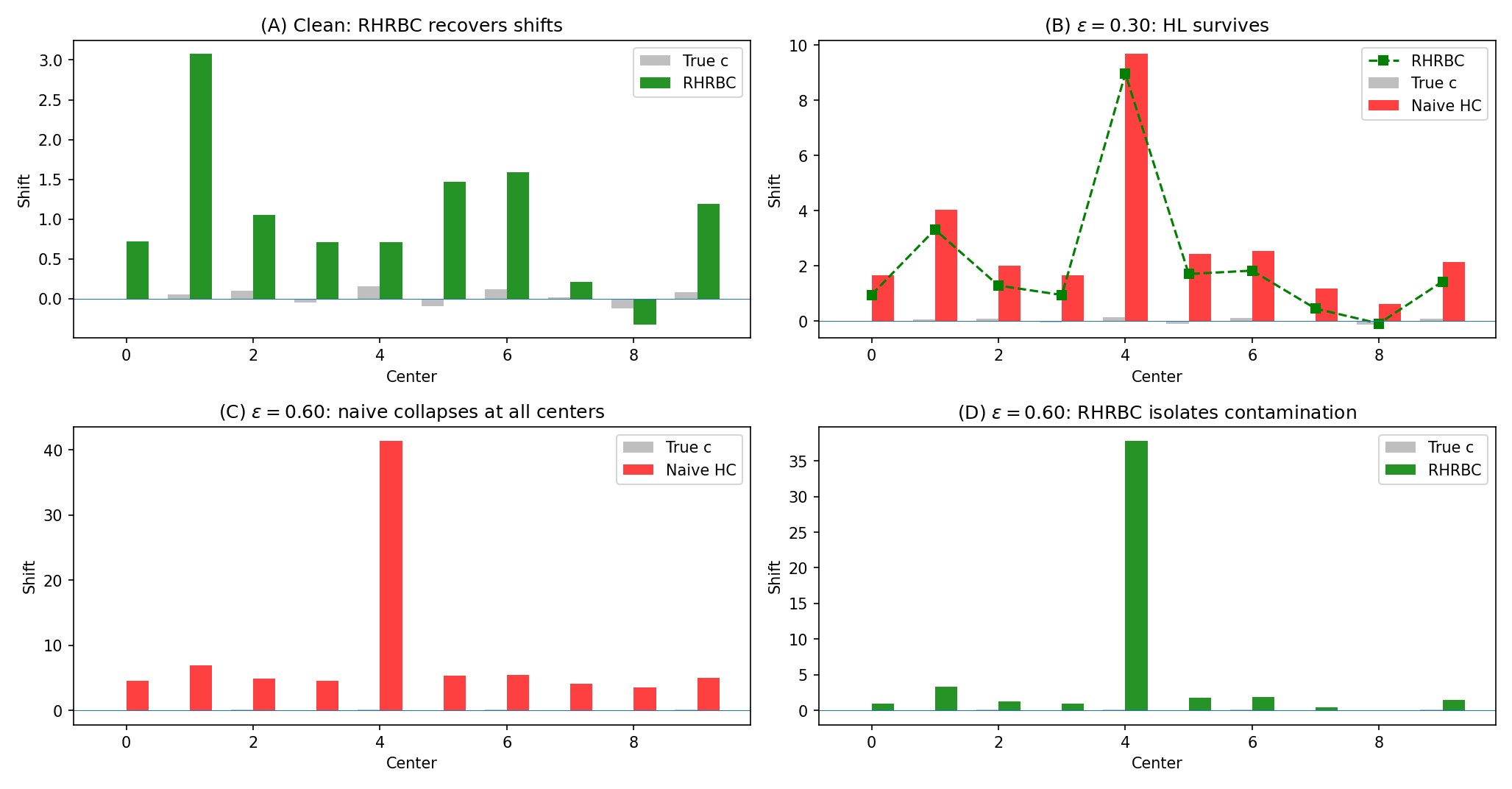}
		\caption{Per-center calibration estimates under four contamination regimes.}
		\label{fig:per_center}
	\end{figure}
	
	\begin{figure}[H]
		\centering
		\includegraphics[width=0.95\textwidth]{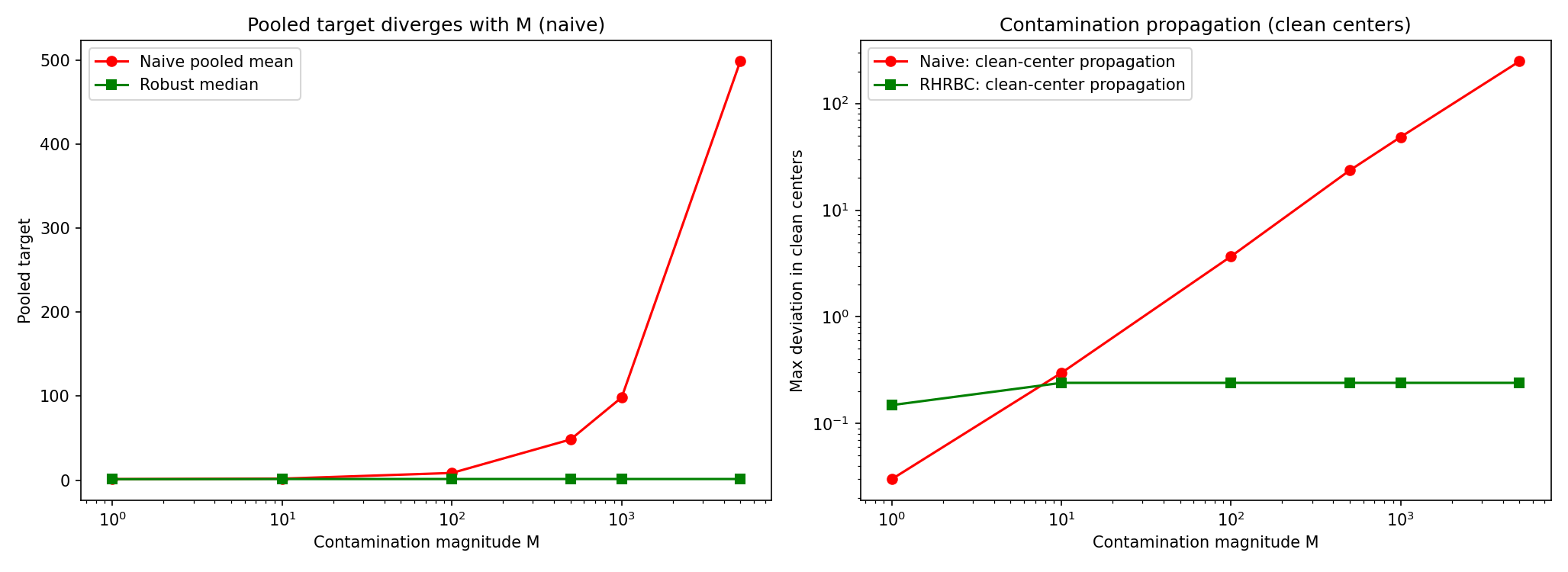}
		\caption{Pooled shrinkage target and clean-center deviation as functions of contamination magnitude $M$.}
		\label{fig:magnitude}
	\end{figure}
	
	\begin{figure}[H]
		\centering
		\includegraphics[width=0.95\textwidth]{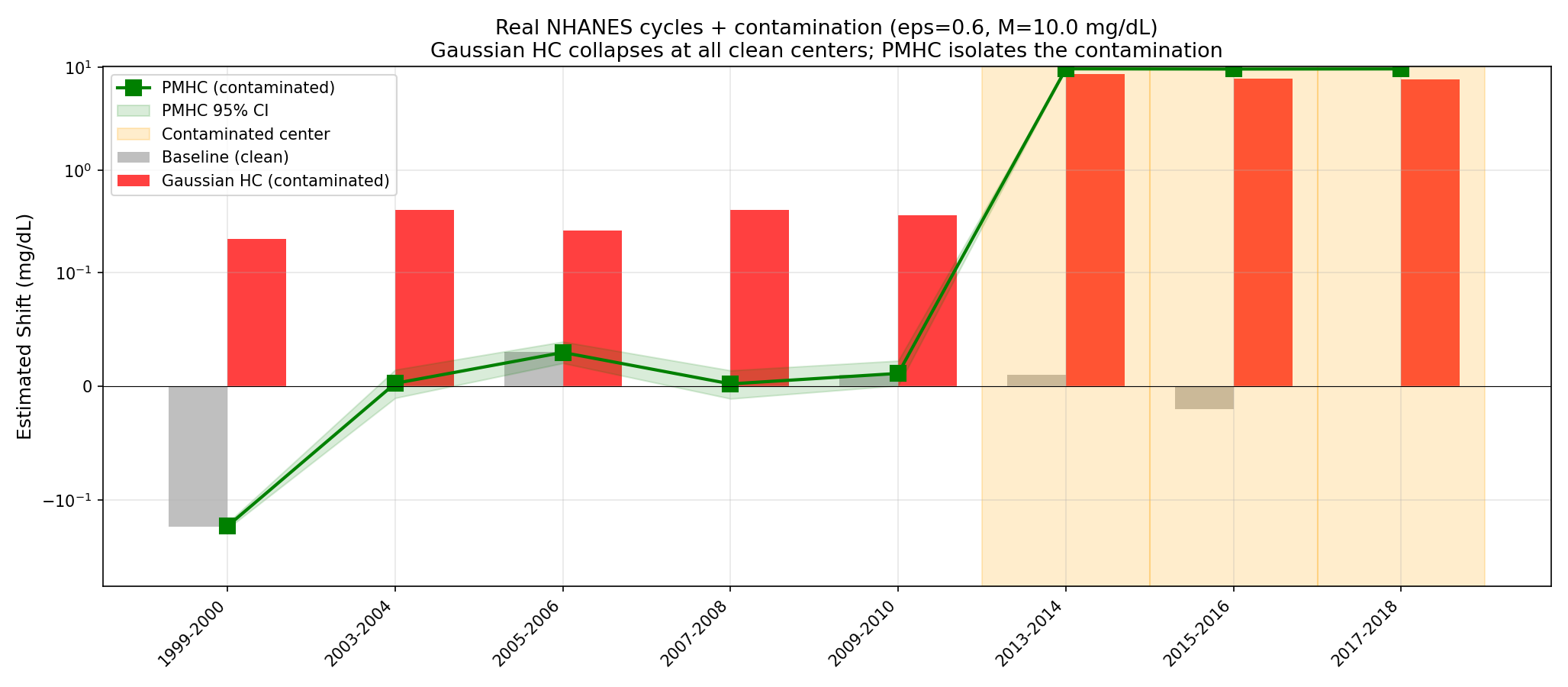}
		\caption{Real NHANES data with multi-center contamination.}
		\label{fig:nhanes_ultimate}
	\end{figure}
	
	\begin{figure}[H]
		\centering
		\includegraphics[width=0.75\textwidth]{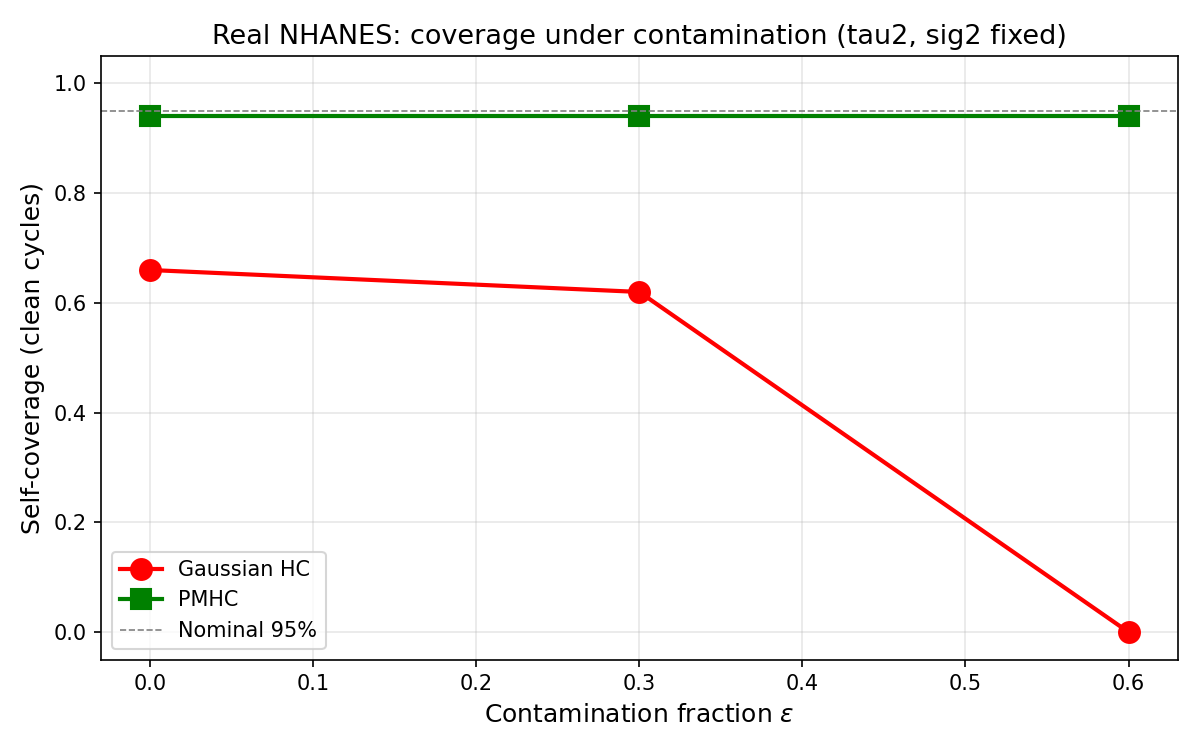}
		\caption{Self-coverage of the $95\%$ credible interval for the clean NHANES cycles.}
		\label{fig:nhanes_coverage}
	\end{figure}
	
	\begin{figure}[H]
		\centering
		\includegraphics[width=0.75\textwidth]{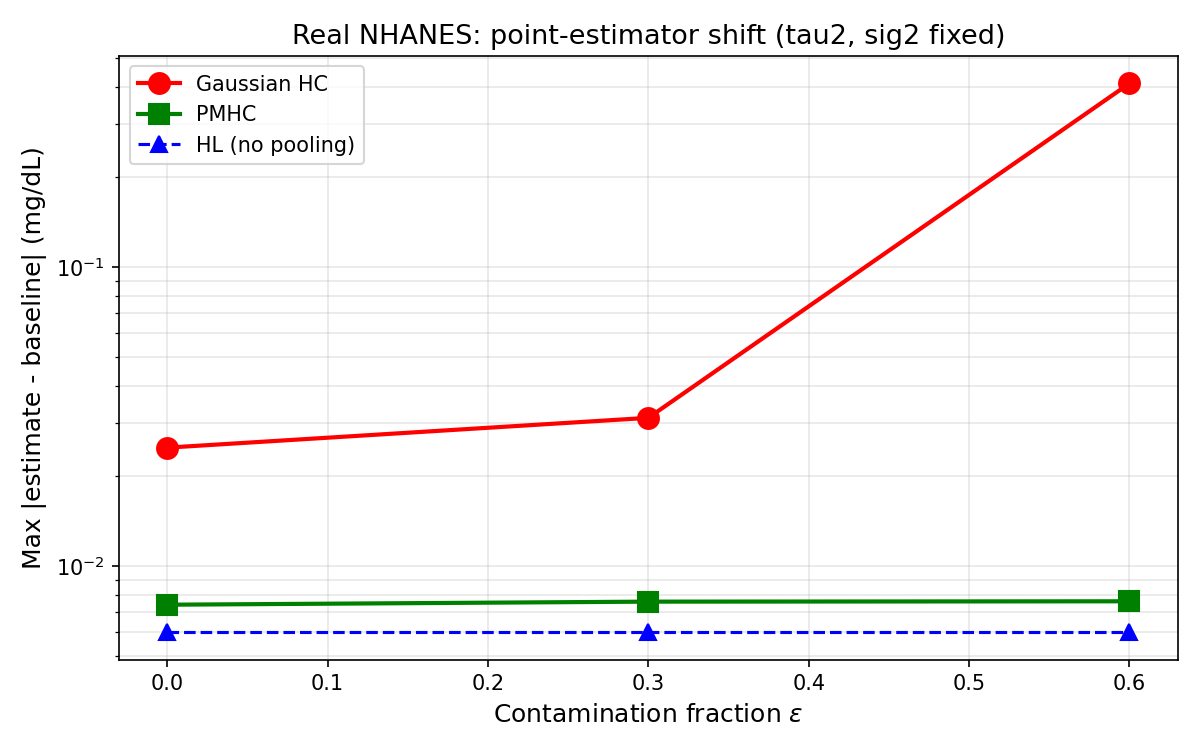}
		\caption{Point-estimator shift for the clean NHANES cycles.}
		\label{fig:nhanes_shift}
	\end{figure}
	
\end{document}